% arxiv version

\documentclass[a4paper,
11pt,
DIV 11]{scrartcl}
\usepackage{url}
\newcommand{\email}[1]{\url{#1}}
\pagestyle{plain}

\usepackage{textcomp}
\usepackage{pifont}
\usepackage{latexsym}
\usepackage{amsfonts}
\usepackage{amssymb}
\usepackage{amsmath}
\usepackage{amsthm}
\usepackage[T1]{fontenc}
\usepackage[utf8]{inputenc}
\usepackage{type1cm}
\usepackage{hyperref}
\usepackage{enumerate}
\usepackage{url}
\usepackage{array}
\usepackage{color}
\usepackage{tikz}
\usetikzlibrary{calc,backgrounds}
\usepackage{arxiv}

\theoremstyle{plain}
\newtheorem{theorem}{Theorem}[section]

\newtheorem{lemma}[theorem]{Lemma}
\newtheorem*{claim}{Claim}
\newtheorem*{lemma*}{Lemma}
\newtheorem*{conjecture}{Conjecture}
\newtheorem{proposition}[theorem]{Proposition}
\theoremstyle{definition}      % Definition with upshape body instead of italics   
\newtheorem{definition}[theorem]{Definition}

\newtheorem{example}[theorem]{Example}

\title{Termination Proofs in the Dependency Pair Framework May Induce
Multiple Recursive Derivational Complexity%
\footnote{This work was partially supported by FWF (Austrian Science Fund) project P20133-N15
and a grant by the office of the vice rector for research of the University of Innsbruck.}}

\author{Georg Moser\\
Institute of Computer Science,\\
University of Innsbruck, Austria\\
\email{georg.moser@uibk.ac.at}
 \and Andreas Schnabl\\
Institute of Computer Science,\\
University of Innsbruck, Austria\\
\email{andreas.schnabl@uibk.ac.at}
}

\date{March 2011}

\begin{document}

\maketitle

\begin{abstract}  We  study  the  derivational complexity  of  rewrite
systems whose termination is provable in the dependency pair framework
using the  processors for reduction  pairs, dependency graphs,  or the
subterm criterion.  We show that  the derivational complexity  of such
systems  is bounded  by a  multiple recursive  function,  provided the
derivational  complexity induced  by the  employed base  techniques is
at most multiple recursive.  Moreover we show that this upper bound is tight.
\end{abstract}

\section{Introduction}
\label{sec:introduction}

Several notions to assess the complexity of a terminating term rewrite system (TRS) have been
proposed in the literature, compare~\cite{CKS:1989,HL89,CL:1992,HM:2008}.
The conceptually simplest one was suggested by Hofbauer and Lautemann in~\cite{HL89}:
the complexity of a given TRS is measured as the maximal length of derivation sequences.
More precisely, the \emph{derivational complexity function} with respect to a terminating TRS $\RS$
relates the maximal derivation height to the size of the initial term.
We adopt this notion as our central definition of the complexity of a TRS.
However, it should be stressed that the main result of this paper immediately
carries over to other complexity measures of TRSs. As an example, at this point
we mention the complexity of the function \emph{computed} by a TRS, which is
investigated in \emph{implicit computational complexity} (see \cite{BMR:2009}
for an overview).
To motivate our study consider the following example.
\begin{example}
\label{ex:intro}
Let $\RSack$ be the TRS defined by the following rules.
\begin{alignat*}{4}
&& \Ack(\Null,y) & \rew \ms(y) & \hspace{10ex}
&& \Ack(\ms(x),\ms(y)) & \rew \Ack(x,\Ack(\ms(x),y))
\\
&& \Ack(\ms(x),\Null) & \rew \Ack(x,\ms(\Null))
\end{alignat*}
It is easy to see that $\RSack$ encodes the Ackermann function,
hence its derivational complexity function grows faster than
any primitive recursive functions.
\end{example}

We show termination of $\RSack$ by an application of the \emph{dependency pair framework}
(\emph{DP framework} for short), compare~\cite{GTSF06,T07}. 
(All used notions will be defined in Section~\ref{sec:preliminaries}.)
The set of dependency pairs $\DP(\RSack)$ with respect to $\RSack$ is given below:
\begin{alignat*}{4}
&& \Ack^{\sharp}(\ms(x),\Null) & \rew \Ack^{\sharp}(x,\ms(\Null)) & \hspace{10ex}
&& \Ack^{\sharp}(\ms(x),\ms(y)) & \rew \Ack^{\sharp}(x,\Ack(\ms(x),y))
\\
&&&& && \Ack^{\sharp}(\ms(x),\ms(y)) & \rew \Ack^{\sharp}(\ms(x),y)
\end{alignat*}
These six rules together constitute the DP problem $(\DP(\RSack),\RSack)$. 
We apply the subterm criterion processor $\PhiSC$ with respect to the simple
projection $\pi_1$ that projects the first argument of the dependency
pair symbol $\Ack^{\sharp}$. Thus $\PhiSC((\DP(\RSack),\RSack))$ consists
of the single DP problem $(\PP,\RSack)$, where $\PP=
\Ack^{\sharp}(\ms(x),\ms(y)) \rew \Ack^{\sharp}(\ms(x),y))$.
Another application of $\PhiSC$, this time with the simple projection $\pi_2$ 
projecting on the second argument of $\Ack^{\sharp}$ yields
the DP problem $(\varnothing,\RSack)$, which is trivially finite. 
Thus termination of $\RSack$ follows.
It is easy to see that the derivational complexity with respect to
$\RSack$ is bounded by a multiple recursive function, but not by a primitive recursive function.
However, how to characterise an upper bound for the derivational complexity induced by
the DP framework (even restricted to one processor, as in the example) in general?

For termination proofs by direct methods a considerable number of results establish
essentially optimal upper bounds on the growth rate of the derivational complexity function.
For example~\cite{W95} studies the derivational complexity induced
by the lexicographic path order (LPO). LPO induces multiple recursive derivational
complexity.
In recent years the research focused on automatable proof methods that
induce polynomially bounded derivational complexity (see for example~\cite{ZK:2010,NZM:2010,W:2010}).
The focus and the re-newed interest in this area was partly triggered by the
integration of a dedicated complexity category 
in the annual international termination competition (TERMCOMP).%
\footnote{\url{http://termcomp.uibk.ac.at}}.

None of these results can be applied to our motivating example. 
Abstracting from Example~\ref{ex:intro} suppose termination of a given TRS $\RS$ 
can be shown via the DP framework in conjunction with 
a well-defined set of processors. Furthermore assume that all \emph{base techniques} employed
in the termination proof (employed within a processor) induce at most multiple recursive derivational
complexity. Kindly note that this assumption is rather weak as for all
termination techniques whose complexity has been analysed a multiple recursive upper bound
exists. 
Based on these assumptions we show that the derivational complexity with respect to 
$\RS$ is bounded by a multiply recursive function. Here we restrict our attention to simple DP processors 
like the \emph{reduction pair processor}, the \emph{dependency graph processor} 
or the \emph{subterm criterion processor}. 
Furthermore we show that this upper bound is tight, even if we restrict to
base techniques that induce linear derivational complexity.

This result can be understood as a negative result: using the above mentioned
DP processors (i.e.~the core of most modern automatic termination provers), it
is theoretically impossible to prove termination of any TRS whose derivational
complexity is not bounded by a multiply recursive function. One famous example
of such a TRS would be Dershowitz's Hydra battle system, which is example
\texttt{TRS/D33-33} in the termination problem database used in TERMCOMP.
On the other hand, our result immediately turns these automatic termination
provers into automatic complexity provers, albeit rather weak ones.

Note the challenge of our investigation.
Our aim is to assess the derivational complexity with respect to $\RS$.
Hence we need to bound the maximal derivation height with respect to $\RS$.
On the other hand, each processor in the assumed termination proof acts locally, while we require a
global bound. For that we essentially use three different ideas. Firstly, we carefully bookmark the active
part of the termination proof. Secondly, based on the termination proof we construct
a TRS $\RSsim$ that interprets the given TRS $\RS$. Thirdly, we use existing techniques
to assess the derivational complexity of $\RSsim$. Our main result then
follows from the combination of these ideas. 

The rest of this paper is organised as follows. In Section~\ref{sec:preliminaries}
we present basic notions and starting points of the paper. Section~\ref{sec:main}
states our main result and provides suitable examples to show that the multiple
recursive bound presented is tight. The technical construction is given in Section~\ref{sec:modular}.
Finally, we conclude in Section~\ref{sec:conclusion}. 
Parts of the proofs have been moved into the Appendix.

\section{Preliminaries}
\label{sec:preliminaries}

\subsection{Term Rewriting}

We assume familiarity with term rewriting (see~\cite{Terese})
and in particular with the DP method and the DP framework 
(see~\cite{HM05,GTSF06,HM07,T07}).
Below we fix the notations used for standard definitions in rewriting
and briefly remember the central definitions of the DP framework.

Let $\VS$ denote a countably infinite set of variables and $\FS$ a 
signature. Without loss of generality we assume that $\FS$ contains
at least one constant. The set of terms over $\FS$ and $\VS$ is denoted by 
$\TERMS$. The set of ground terms over $\FS$ is denoted as $\GTERM$.
The (proper) subterm relation is denoted as $\subterm$ ($\prsubterm$).
The \emph{root symbol} (denoted as $\rt(t)$) of a term $t$ is either $t$ itself, if
$t \in \VS$, or the symbol $f$, if $t = f(\seq{t})$. 
The set of \emph{positions} $\Pos(t)$ of a term $t$ is defined as usual.
The subterm of $t$ at position $p$ is denoted as $\atpos{t}{p}$.
The \emph{size} $\size{t}$ and the \emph{depth} $\depth{t}$ of a term $t$ are defined as usual
(e.g., $\size{\mf(\ma,x)}=3$ and $\depth{\mf(\ma,x)}=1$).
Let $\RS$ and $\SS$ be finite TRSs over $\FS$. We write $\rsrew{\RS}$ (or simply $\rew$) for
the induced rewrite relation. 
We recall the notion of \emph{relative rewriting}, cf.~\cite{G90,Terese}.
Let $\RR$ and $\SS$ be finite TRSs over $\FS$.
We write $\rsrew{\RR / \SS}$ for $\rssrew{\SS} \cdot \rsrew{\RS} \cdot \rssrew{\SS}$ and
we call $\rsrew{\RR / \SS}$ the \emph{relative rewrite relation} of $\RR$ modulo $\SS$.
Clearly, we have that ${\rsrew{\RR / \SS}} = {\rsrew{\RS}}$, if $\SS = \varnothing$.

A term $s \in \TERMS$ is called a \emph{normal form} 
if there is no $t \in \TERMS$ such that $s \rsrew{\RS} t$.
We use $\NF(\RS)$ to denote the set of normal-forms of $\RS$, and $\NF(\RS/\SS)$ for the
set of normal forms of $\rsrew{\RS/\SS}$.
We write $s \rsrootrew{\RS} t$ ($s \rsnonrootrew{\RS} t$) for rewrite steps with respect
to $\RS$ at (below) the root.
The $n$-fold composition of $\rew$ is denoted as $\rew^n$ and
the \emph{derivation height} of a term $s$ with respect to a
finitely branching, well-founded binary relation $\rew$ on terms is defined as
$\dheight(s,\rew) \defsym \max\{ n \mid \exists t \; s \rew^n t \}$.
The \emph{derivational complexity function} 
of $\RS$
is defined as: $\Dc{\RS}(n) = \max\{\dheight(t,\rsrew{\RS}) \mid \size{t} \leqslant n\}$.
Let $\RS$ be a TRS and $M$ a termination method. We say $M$ \emph{induces} a certain 
derivational complexity, if $\Dc{\RS}$ is bounded by a function of this complexity, whenever 
termination of $\RS$ follows by $M$.

Let $t$ be a term. We set $t^\sharp = t$ if $t \in \VS$, and 
$t^\sharp = f^\sharp(t_1,\dots,t_n)$ if $t = f(\seq{t})$.
Here $f^\sharp$ is a new $n$-ary function symbol called 
\emph{dependency pair symbol}. 
The set $\DP(\RS)$ of \emph{dependency pairs} of $\RS$
is defined as $\{ l^\sharp \to u^{\sharp} \mid {l \to r} \in {\RS},
\text{$u \subterm r$, but $u \nprsubterm l$}, \text{and $\rt(u)$ is defined} \}$.%
\footnote{The observation that pairs $l^\sharp \to u^{\sharp}$ such that
$u \prsubterm l$ need not be considered is due to Nachum Dershowitz. Thus we sometimes
refer to the restriction ``$u \nprsubterm l$'' as the \emph{Dershowitz condition}.}
A \emph{DP problem} is a pair $(\PP,\RS)$\footnote{We use a simpler definition
of DP problems than \cite{GTSF06} or \cite{T07}, which suffices for presenting
the results of this paper.},
where $\PP$ and $\RS$ are sets of rewrite rules. It is \emph{finite} if there
exists no infinite sequence of rules $s_1 \rew t_1,s_2 \rew t_2,\ldots$ from
$\PP$ such that for all $i>0$, $t_i$ is terminating with respect to $\RS$, and
there exist substitutions $\sigma$ and $\tau$ with
$t_i\sigma\rssrew{\RS}s_{i+1}\tau$.
A DP problem of the form $(\varnothing,\RS)$ is trivially finite.
We recall the following (well-known) characterisation of termination of a TRS.
A TRS $\RS$ is terminating if and only if the DP problem $(\DP(\RS),\RS)$
is finite.
A \emph{DP processor} is a mapping from DP problems to sets of DP
problems. A DP processor $\Phi$ is \emph{sound} if for all DP problems
$(\PP,\RS)$, $(\PP,\RS)$ is finite whenever all DP problems in
$\Phi((\PP,\RS))$ are finite. 

A \emph{reduction pair} $(\succcurlyeq, \succ)$ consists of a
preorder $\succcurlyeq$ which is closed under contexts and substitutions,
and a compatible well-founded order $\succ$
which is closed under substitutions. Here compatibility means
the inclusion ${\succcurlyeq \cdot \succ \cdot \succcurlyeq} \subseteq {\succ}$.
Recall that any well-founded weakly monotone algebra $(\AA,\succcurlyeq)$
gives rise to a reduction pair $(\geqord[\succcurlyeq]{\AA},\gord[\succ]{\AA})$.
\begin{proposition}[\cite{GTSF06}] \label{prop:redpair}
Let $(\succcurlyeq,\succ)$ be a reduction pair.
Then the following DP processor (\emph{reduction pair processor}) $\PhiRP$ is
sound:
\begin{equation*}
\PhiRP((\PP,\RS))=\begin{cases}
\{(\PP',\RS)\} & \text{if ${\PP'\cup\RS}\subseteq{\succcurlyeq}$ and ${\PP\setminus\PP'}\subseteq{\succ}$} \\
\{(\PP,\RS)\} & \text{otherwise} \tpkt
\end{cases}
\end{equation*}
\end{proposition}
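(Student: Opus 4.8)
The plan is to establish soundness of $\PhiRP$ directly from the definition of a finite DP problem. The only non-trivial case is when $\PhiRP((\PP,\RS)) = \{(\PP',\RS)\}$ under the side conditions ${\PP'\cup\RS}\subseteq{\succcurlyeq}$ and ${\PP\setminus\PP'}\subseteq{\succ}$; the ``otherwise'' case is trivial since the processor returns its own input. So I assume $(\PP',\RS)$ is finite and argue by contraposition that $(\PP,\RS)$ is finite. Suppose not: then there is an infinite sequence of rules $s_1 \rew t_1, s_2 \rew t_2, \ldots$ from $\PP$ together with substitutions witnessing, for each $i$, that $t_i$ is $\RS$-terminating and $t_i\sigma_i \rssrew{\RS} s_{i+1}\tau_{i+1}$.

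The key step is to push this infinite sequence through the reduction pair and derive a contradiction with well-foundedness of $\succ$. First I would observe that from ${\RS}\subseteq{\succcurlyeq}$ and closure of $\succcurlyeq$ under contexts and substitutions, every $\RS$-rewrite step weakly decreases, so $t_i\sigma_i \rssrew{\RS} s_{i+1}\tau_{i+1}$ yields $t_i\sigma_i \succcurlyeq s_{i+1}\tau_{i+1}$ (reflexivity of $\succcurlyeq$ handles the zero-step case). Next, since each pair $s_i \rew t_i$ lies in $\PP = (\PP\setminus\PP')\cup\PP'$, closure under substitutions gives either $s_i\tau_i \succ t_i\sigma_i$ when $s_i \rew t_i \in \PP\setminus\PP'$, or $s_i\tau_i \succcurlyeq t_i\sigma_i$ when $s_i \rew t_i \in \PP'$. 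Chaining these along the infinite sequence produces an infinite descending $\succcurlyeq/\succ$ chain, and compatibility ${\succcurlyeq\cdot\succ\cdot\succcurlyeq}\subseteq{\succ}$ collapses any consecutive segment containing at least one strict step into a single $\succ$-step.

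The main obstacle is the bookkeeping needed to rule out the degenerate scenario where only finitely many pairs come from $\PP\setminus\PP'$: in that case the chain is eventually entirely within $\PP'$, and no contradiction with $\succ$ arises. Here I would instead use the assumed finiteness of $(\PP',\RS)$: the tail of the sequence (after the last occurrence of a $\PP\setminus\PP'$ pair) is an infinite sequence of $\PP'$-rules whose intermediate terms remain $\RS$-terminating and are connected by the same relative-rewrite condition, so it directly witnesses that $(\PP',\RS)$ is \emph{not} finite, contradicting our assumption. Conversely, if infinitely many pairs come from $\PP\setminus\PP'$, then by compatibility we extract an infinite $\succ$-chain, contradicting well-foundedness of $\succ$. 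Either way we reach a contradiction, so $(\PP,\RS)$ is finite and $\PhiRP$ is sound. The remaining verification is routine; I would remark only that the handling of the $\RS$-termination requirement on the $t_i$ carries over unchanged to the tail sequence, which is what makes the appeal to finiteness of $(\PP',\RS)$ legitimate.
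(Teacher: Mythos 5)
Your proof is correct, but note that the paper itself gives no proof of this proposition: it is imported from Giesl et al.~\cite{GTSF06}, so the only meaningful comparison is with the standard argument from the literature, which your proposal reproduces faithfully. Your decomposition is the right one. The ``otherwise'' branch is vacuous; in the main branch you assume an infinite sequence witnessing non-finiteness of $(\PP,\RS)$ and split on whether pairs from $\PP\setminus\PP'$ occur infinitely often. That case split is precisely what makes the argument work, since pairs in $\PP'$ decrease only weakly: if strict pairs recur infinitely often, then ${\RS}\subseteq{\succcurlyeq}$ together with closure under contexts and substitutions turns every connecting $\rssrew{\RS}$-reduction into a $\succcurlyeq$-step (reflexivity covering the empty reduction, as you note), and compatibility ${\succcurlyeq\cdot\succ\cdot\succcurlyeq}\subseteq{\succ}$ collapses the mixed chain into an infinite $\succ$-descent, contradicting well-foundedness; otherwise the tail beyond the last strict pair is an infinite sequence of $\PP'$-rules, and you correctly observe that both the minimality condition (each $t_i$ terminating with respect to $\RS$) and the linking condition carry over unchanged, so it witnesses non-finiteness of $(\PP',\RS)$, contradicting the assumed finiteness. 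The one point of hygiene worth flagging: you write inequalities such as $s_i\tau_i \succ t_i\sigma_i$, placing two \emph{different} substitutions on the two sides of a single pair, whereas closure under substitutions only yields $s_i\delta \succ t_i\delta$ for one substitution $\delta$. This is an artifact of the paper's loose phrasing of the chain condition (``there exist substitutions $\sigma$ and $\tau$''); under the intended reading, in which each occurrence of a pair is instantiated by a single substitution $\delta_i$ and the links read $t_i\delta_i \rssrew{\RS} s_{i+1}\delta_{i+1}$, your chain of inequalities is exactly right --- and under the literal per-link reading with unrelated substitutions no soundness proof could chain the inequalities at all, so the intended reading is the only sensible one.
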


The \emph{dependency graph} of a DP problem $(\PP,\RS)$ (denoted by
$\DG(\PP,\RS)$) is a graph whose nodes are the elements of $\PP$. It contains
an edge from $s\rew t$ to $u\rew v$ whenever there exist substitutions
$\sigma$ and $\tau$ such that $t\sigma\rssrew{\RS}u\tau$. A \emph{strongly
connected component} (\emph{SCC} for short) of $\DG(\PP,\RS)$ is a maximal
subset of nodes such that for each pair of nodes $s\rew t$, $u\rew v$, there
exists a path from $s\rew t$ to $u\rew v$. Note that this is the standard
definition of SCC from graph theory (cf.~\cite{CSRL2001}, for instance), which
slightly differs from the definition that is often used in the termination
literature. We call an SCC \emph{trivial} if it consists of a single node
$s\rew t$ such that the only path from that node to itself is the empty path.
All other SCCs are called \emph{nontrivial}, i.e.~what the termination
literature usually calls an SCC is exactly a nontrivial SCC according to the
notation used by us.
\begin{proposition}[\cite{GTSF06}] \label{prop:dg}
The following DP processor (\emph{dependency graph processor}) $\PhiDG$ is
sound:
$
\PhiDG((\PP,\RS))=\{(\PP',\RS)\mid\text{$\PP'$ is a nontrivial SCC of $\DG(\PP,\RS)$}\} \tpkt
$
\end{proposition}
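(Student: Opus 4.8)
I plan to establish soundness by contraposition: I will show that if $(\PP,\RS)$ is not finite, then at least one DP problem in $\PhiDG((\PP,\RS))$ is not finite. To this end, suppose $(\PP,\RS)$ is not finite and fix an infinite sequence of rules $s_1 \rew t_1, s_2 \rew t_2, \ldots$ from $\PP$ such that every $t_i$ is terminating with respect to $\RS$ and for each $i$ there are substitutions witnessing $t_i\sigma \rssrew{\RS} s_{i+1}\tau$. By the definition of $\DG(\PP,\RS)$, the latter condition means precisely that there is an edge from $s_i \rew t_i$ to $s_{i+1} \rew t_{i+1}$; hence the sequence traces out an infinite path through the nodes of the dependency graph.

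First I would exploit the finiteness of $\PP$. Since $\DG(\PP,\RS)$ has only finitely many nodes, the set $N$ of nodes occurring infinitely often in the path is nonempty, and there is an index $k$ such that every path node from position $k$ onwards lies in $N$. The key combinatorial step is then to argue that $N$ is contained in a single nontrivial SCC $\PP'$: for any two nodes $u, w \in N$ and any occurrence of $u$ at a position $\geqslant k$, a later occurrence of $w$ yields a path from $u$ to $w$, and symmetrically from $w$ to $u$, so all nodes of $N$ are mutually reachable and thus lie in one SCC $\PP'$. Moreover $\PP'$ is nontrivial, since the edges between the consecutive path nodes $s_i \rew t_i$ and $s_{i+1} \rew t_{i+1}$ for $i \geqslant k$ furnish a nonempty cycle within $N$.

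Finally I would conclude by passing to the tail $s_k \rew t_k, s_{k+1} \rew t_{k+1}, \ldots$ of the original sequence. All of its rules lie in $N \subseteq \PP'$, and the chain conditions (termination of each $t_i$ with respect to $\RS$ and the existence of the connecting substitutions) are inherited directly from the full sequence. Hence this tail witnesses that $(\PP',\RS)$ is not finite, and since $\PP'$ is a nontrivial SCC we have $(\PP',\RS) \in \PhiDG((\PP,\RS))$, which establishes the contrapositive and therefore the soundness claim.

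The only genuinely delicate point I anticipate is the extraction of the infinitely recurring set $N$ together with the verification that it falls inside one nontrivial SCC, which is essentially a pigeonhole argument on the finite node set combined with the maximality built into the definition of SCC. The remaining bookkeeping, namely matching the edge relation of $\DG(\PP,\RS)$ to the chain condition and checking inheritance of the chain conditions by a tail, is routine once the relevant definitions are unwound.
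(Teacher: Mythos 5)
Your proposal is correct; the paper itself offers no proof of this proposition, importing it by citation from \cite{GTSF06}, and your contrapositive argument---trace the infinite chain as a path in $\DG(\PP,\RS)$, extract by pigeonhole the set $N$ of infinitely recurring nodes, show mutual reachability places $N$ inside a single SCC $\PP'$, and pass to a tail of the chain witnessing non-finiteness of $(\PP',\RS)$---is precisely the standard proof from the dependency pair literature. Your cycle observation also correctly discharges the one point specific to this paper's conventions, namely that under its definition of trivial SCC a singleton $N$ still yields a \emph{nontrivial} SCC because the recurring occurrences furnish a nonempty path from the node to itself.
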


A \emph{simple projection} is an argument filtering $\pi$ such that for each
function symbol $f\in\FS$ of arity $n$, we have $\pi(f)=[1,\ldots,n]$, and for
each dependency pair symbol $f\in\FS^\sharp\setminus\FS$, $\pi(f)=i$ for some
$1\leqslant i\leqslant n$.
\begin{proposition}[\cite{HM07,T07}] \label{prop:subterm}
Let $\pi$ be a simple projection. Then the following DP processor
(\emph{subterm criterion processor}) $\PhiSC$ is sound:
\begin{equation*}
\PhiSC((\PP,\RS))=\begin{cases}
\{(\PP',\RS)\} & \text{if $\pi(\PP')\subseteq{\superterm}$ and
${\pi(\PP\setminus\PP')}\subseteq{\prsuperterm}$} \\
\{(\PP,\RS)\} & \text{otherwise} \tpkt
\end{cases}
\end{equation*}
\end{proposition}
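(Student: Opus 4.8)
The plan is to prove soundness by contraposition, transporting an infinite minimal chain for $(\PP,\RS)$ through the simple projection $\pi$ and reducing everything to the well-foundedness of the proper subterm relation interleaved with $\RS$-rewriting. First I would record the elementary behaviour of $\pi$ on the terms occurring in a chain. Every rule of $\PP$ carries a dependency pair symbol at its root, every rule of $\RS$ is built over $\FS$ alone, and chains instantiate variables by ordinary terms; hence each term in a chain has a dependency pair symbol exactly at its root and ordinary terms below it. On such terms $\pi$ commutes with substitutions, i.e.\ $\pi(s\sigma)=\pi(s)\sigma$, so a root step with a rule $s\rew t\in\PP$ projects to $\pi(s)\sigma\mathrel{R}\pi(t)\sigma$, where $R$ is $\superterm$ if $s\rew t\in\PP'$ and $\prsuperterm$ if $s\rew t\in\PP\setminus\PP'$; here I use that both $\superterm$ and $\prsuperterm$ are closed under substitution (a proper subterm sits at a nonempty position, which persists under instantiation). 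Moreover, since $\RS$ never rewrites the root dependency pair symbol, each step $v\rsrew{\RS}w$ linking the chain takes place strictly below the root and therefore projects either to $\pi(v)\rsrew{\RS}\pi(w)$ or to $\pi(v)=\pi(w)$, so that $\rssrew{\RS}$-segments project to $\rssrew{\RS}$-segments.

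Applying these observations linkwise to an infinite minimal $(\PP,\RS)$-chain yields an infinite sequence of the schematic form
\[
\cdots\mathrel{R_i}\pi(t_i)\sigma\rssrew{\RS}\pi(s_{i+1})\tau\mathrel{R_{i+1}}\cdots\tpkt
\]
By the minimality condition built into the definition of a finite DP problem the right-hand sides are $\RS$-terminating, and each $\pi(t_i)\sigma$ is a subterm of the corresponding right-hand side; since $\rsrew{\RS}$ preserves $\RS$-termination, every term from the first projected right-hand side onward is $\RS$-terminating. The key lemma is then a quasi-commutation property: if $a\prsuperterm b\rsrew{\RS}c$, then $b$ occupies a nonempty position of $a$, and rewriting inside that copy gives $a\rsrew{\RS}a'$ with $a'\prsuperterm c$; hence ${\prsuperterm\cdot\rsrew{\RS}}\subseteq{\rsrew{\RS}\cdot\prsuperterm}$, and likewise for $\superterm$. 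Because $\rsrew{\RS}$ is well-founded on $\RS$-terminating terms and $\prsuperterm$ is well-founded, the standard commutation lemma shows that ${\rsrew{\RS}\cup\prsuperterm}$ is well-founded on $\RS$-terminating terms.

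To conclude, I would collapse the reflexive instances of $\superterm$: a step coming from a $\PP'$ rule is either a genuine $\prsuperterm$ step or an identity, whereas a step from $\PP\setminus\PP'$ is always a $\prsuperterm$ step. If the chain used infinitely many rules from $\PP\setminus\PP'$, the projected sequence would be an infinite ${\rsrew{\RS}\cup\prsuperterm}$ derivation on $\RS$-terminating terms containing infinitely many $\prsuperterm$ steps, contradicting the well-foundedness just established. Thus only finitely many rules of the chain lie in $\PP\setminus\PP'$, so some tail of the chain uses rules of $\PP'$ exclusively; this tail is an infinite minimal $(\PP',\RS)$-chain, contradicting finiteness of $(\PP',\RS)$. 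I expect the main obstacle to be precisely the interaction captured in the second paragraph, namely verifying the quasi-commutation of $\prsuperterm$ over $\rsrew{\RS}$ and invoking the union-well-foundedness lemma, together with the careful bookkeeping that the minimality (termination) of the chain transfers to the projected terms so that the well-foundedness argument may be carried out on $\RS$-terminating terms; the projection identities themselves are routine once the single-dependency-pair-symbol-at-the-root invariant is fixed.
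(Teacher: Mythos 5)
Your proposal is correct, but note that the paper itself gives no proof of this proposition---it is imported from the cited literature \cite{HM07,T07}---and your argument is essentially the standard one found there: project the chain through $\pi$ using that dependency pair symbols occur only at the roots, observe that root steps become $\superterm$/$\prsuperterm$ steps and the intermediate $\RS$-steps project to $\RS$-steps or identities, and rule out infinitely many $\PP\setminus\PP'$ steps via the quasi-commutation ${\prsuperterm\cdot\rsrew{\RS}}\subseteq{\rsrew{\RS}\cdot\prsuperterm}$ and well-foundedness of ${\rsrew{\RS}}\cup{\prsuperterm}$ on terminating terms, so that a tail of the chain is a minimal $(\PP',\RS)$-chain. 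All the key steps (closure of $\pi$ under substitution on sharped terms, transfer of the minimality condition to the projected terms, the union-well-foundedness lemma) are exactly the ingredients of the proof in the cited sources, so there is nothing to correct.
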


Let $\RS$ be a TRS. A \emph{proof tree} of $\RS$ is a tree
satisfying the following: the nodes are $\DP$ problems, the root is
$(\DP(\RS),\RS)$, each leaf is a DP problem of the shape $(\varnothing,\RS)$,
and for each inner node $(\PP,\RS')$, there exists a sound DP processor $\Phi$
such that each element of $\Phi((\PP,\RS'))$ is a child of $(\PP,\RS')$, and
each of the edges from $(\PP,\RS')$ to the elements of $\Phi((\PP,\RS'))$ is
labelled by $\Phi$.

% We sum up the above propositions into the following theorem, that expresses exactly
% the sort of termination proofs we study in this paper.

It follows by construction that each TRS for which a proof tree exists is
terminating. For future reference, we state this property specialised to the
mentioned DP processors.

\begin{theorem}
\label{t:termination}
Let $\RS$ be a TRS such that there exists a proof tree of $\RS$. Suppose that
each edge label of that proof tree is a reduction pair, dependency graph, or
subterm criterion processor. Then $\RS$ is terminating.
\end{theorem}
% \begin{proof}
% Straightforward from Propositions~\ref{prop:redpair}--\ref{prop:subterm}.
% \end{proof}

For each of the DP processors $\Phi$ considered in this paper, the following
facts are obvious: $(\PP',\RS')\in\Phi((\PP,\RS))$ implies $\PP'\subset\PP$
and $\RS'=\RS$. Therefore, we assume  throughout the rest of this paper that
for each DP problem $(\PP,\RS)$, $\PP\subseteq\DP(\RS)$. In particular, each
rule in $\PP$ has the shape $s^\sharp \rew t^\sharp$ for some $s,t\in\TERMS$.
Moreover, $(\PP',\RS)\in\Phi((\PP,\RS))$, $(\PP'',\RS)\in\Phi((\PP,\RS))$, and $\PP'\neq\PP''$
imply $\PP'\cap\PP''=\varnothing$. Therefore, each dependency pair can only
appear in a single branch of a proof tree.

\subsection{Recursion Theory}

We recall some essentials of recursion theory, compare
\cite{Peter:1967,Rose1984}, for instance.
The following functions over $\N$ are called \emph{initial functions}. 
The constant zero function $z_n(x_1,\ldots,x_n)=0$ of all arities, the unary
successor function $s(x)=x+1$, and all projection functions
$\pi_n^i(x_1,\ldots,x_n)=x_i$ for $1\leqslant i\leqslant n$. A class $\CC$ of
functions over $\N$ is \emph{closed under composition} if for all
$f\colon\N^m\to\N$ and $g_1,\ldots,g_m\colon\N^n\to\N$ in $\CC$, the function
$h(x_1,\ldots,x_n)=f(g_1(x_1,\ldots,x_n),\ldots,g_m(x_1,\ldots,x_n))$ is in
$\CC$, as well. It is \emph{closed under primitive recursion} if for all
$f\colon\N^n\to\N$ and $g\colon\N^{n+2}\to\N$, the function $h$ defined by
$h(0,x_1,\ldots,x_n)=f(x_1,\ldots,x_n)$ and $h(y+1,x_1,\ldots,x_n)=
f(y,h(y,x_1,\ldots,x_n),x_1,\ldots,x_n)$ is contained in $\CC$, as well.
The \emph{$k$-ary Ackermann function} $A_k$ for $k\geqslant2$ is defined
recursively as follows:
\begin{align*}
A_k(0,\ldots,0,x_k)&=x_k+1\\
A_k(x_1,\ldots,x_{k-2},x_{k-1}+1,0)&=A_k(x_1,\ldots,x_{k-1},1)\\
A_k(x_1,\ldots,x_{k-2},x_{k-1}+1,x_k+1)&=A_k(x_1,\ldots,x_{k-1},A_k(x_1,\ldots,x_{k-2},x_{k-1}+1,x_k))\\
A_k(x_1,\ldots,x_{i-1},x_i+1,0,\ldots,0,x_k)&=A_k(x_1,\ldots,x_i,x_k,0,\ldots,0,x_k)
\end{align*}
The set of \emph{primitive recursive functions} is the smallest set of
functions over $\N$ which contains all initial functions and is closed under
composition and primitive recursion. The set of \emph{multiply recursive
functions} is the smallest set of function over $\N$ which contains all initial
functions and $k$-ary Ackermann functions, and is closed under composition and
primitive recursion.
\begin{proposition}[\cite{Rose1984}, Chapter 1]
\label{prop:mreccoverage}
For every multiply recursive function $f$, there exists a $k$ such that $A_k$
asymptotically dominates $f$.
\end{proposition}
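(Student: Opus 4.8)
The plan is to argue by structural induction on the inductive definition of the multiply recursive functions. Throughout, write $\widehat A_k(x) := A_k(x,\dots,x)$ for the diagonalised unary version of the $k$-ary Ackermann function, and read ``$A_k$ asymptotically dominates the $n$-ary function $f$'' as: $f(x_1,\dots,x_n) \leqslant \widehat A_k(\max\{x_1,\dots,x_n\})$ holds for all but finitely many argument tuples. Let $\mathcal M_k$ denote the class of functions asymptotically dominated by $\widehat A_k$, and put $\mathcal M := \bigcup_{k\geqslant 2}\mathcal M_k$. Since the multiply recursive functions form the least class containing the initial functions and the $A_k$ that is closed under composition and primitive recursion, it suffices to show that $\mathcal M$ contains all these generators and is closed under both operations; the proposition then follows because each multiply recursive $f$ lands in some $\mathcal M_k$, i.e.\ is dominated by $\widehat A_k$, and hence by $A_k$ in the stated sense.

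First I would establish the arithmetic scaffolding for the $\widehat A_k$ directly from the defining equations, by (nested) induction on the arguments. The properties needed are: (i) each $A_k$ is weakly monotone in every argument and satisfies the inflationary bound $A_k(\vec x) > x_k$, whence $\widehat A_k(t) > t$; (ii) the hierarchy is increasing, $\widehat A_k \in \mathcal M_{k+1}$; and (iii) a \emph{bounded--iteration absorption} property: iterating $\widehat A_k$ a number of times that is itself bounded by (a fixed function of) the arguments is again dominated one level up, by $\widehat A_{k+1}$, at an argument of size $O(\max\{\dots\})$. Property (iii) is the workhorse; it mirrors the familiar fact for the binary Ackermann function that increasing the principal argument by one amounts to iterating the function of the remaining argument, and it is proved by induction using the third and fourth defining clauses of $A_k$.

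With the scaffolding in place the induction is routine in all cases but one. For the base cases, the initial functions (zero, successor, projections) already lie in $\mathcal M_2$, and each $A_k$ lies in $\mathcal M_k$ by monotonicity. For composition, if $h\in\mathcal M_a$ and $g_1,\dots,g_m\in\mathcal M_{b_1},\dots,\mathcal M_{b_m}$, put $K=\max\{a,b_1,\dots,b_m\}$; then by monotonicity the composite is bounded by $\widehat A_K(\widehat A_K(\max\{x_1,\dots,x_n\}))$, that is, by a fixed (constant) number of iterations of $\widehat A_K$, which by property (iii) places it in $\mathcal M_{K+1}$.

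The genuinely hard case, and the expected main obstacle, is closure under primitive recursion. Suppose $h$ is defined by primitive recursion from $f$ and the step function $g$, with $f,g\in\mathcal M_k$. Fixing $\vec x$ and setting $M(y):=\max\{y,x_1,\dots,x_n,h(y,\vec x)\}$, the recursion equation together with the weak monotonicity and the inflationary property $\widehat A_k(t)>t$ from (i) yields $M(y+1)\leqslant \widehat A_k(M(y))$; hence $M(y)$, and with it $h(y,\vec x)$, is bounded by the $y$-fold iterate of $\widehat A_k$ applied to a starting value lying in $\mathcal M_k$. The recursion depth here is $y$, an \emph{argument} rather than a constant, so this is a bounded iteration of exactly the kind controlled by property (iii), and it collapses into $\widehat A_{k+1}$ evaluated at an argument of size $O(\max\{y,x_1,\dots,x_n\})$, placing $h$ in $\mathcal M_{k+1}$. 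The technical crux is the precise bookkeeping of this unbounded-depth iteration: one must track how the counting argument inside $A_{k+1}$ simulates the recursion depth $y$ while the remaining arguments absorb the data $\vec x$, and it is here that the specific shape of the Ackermann recursion, in particular the argument-shifting fourth clause, is indispensable. Once (iii) is established with the required uniformity, this case, and with it the whole induction, goes through.
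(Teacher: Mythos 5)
The paper gives no proof of this proposition at all---it is quoted verbatim from Rose's book---so the benchmark is the standard argument there, and your plan is essentially that argument: a structural induction over the generation of the multiply recursive functions via majorization classes $\mathcal{M}_k$, resting on monotonicity/inflation lemmas for the $A_k$ and on the absorption of argument-bounded iteration of $\widehat{A}_k$ into $\widehat{A}_{k+1}$, with composition and primitive recursion each costing one level of the hierarchy. Your plan is sound as written; the only point to watch when elaborating it is to upgrade ``asymptotic'' domination to everywhere-domination with an additive offset (possible since $\widehat{A}_k$ is monotone and satisfies $\widehat{A}_k(t)>t$) before running the inductions for composition and primitive recursion, which is routine bookkeeping.
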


\section{Main Theorem}
\label{sec:main}

In this short section we show that there
exist TRSs $\RS$ whose termination is shown via Theorem~\ref{t:termination} such that
the derivational complexity cannot be bounded by a primitive recursive function.
Furthermore we state our main result in precise terms.

\begin{example}
\label{ex:redpairlowerbound}
Consider the following TRS $\RSdieter$, taken from~\cite{HL89,H92b}:
\begin{equation*}
\mi(x)\mcirc(y\mcirc z) \rew x\mcirc(\mi(\mi(y))\mcirc z) \qquad
\mi(x)\mcirc(y\mcirc(z\mcirc w)) \rew x\mcirc(z\mcirc(y\mcirc w))  
\end{equation*}
It is shown in \cite{H92b} that $\Dc{\RSdieter}$ is not primitive recursive
as the system encodes the Ackermann function.
\end{example}

Following Endrullis et al.~\cite[Example~11]{EWZ08} we show termination of
$\RSdieter$ employing Theorem~\ref{t:termination}. The dependency
pairs with respect to $\RSdieter$ are:
\begin{alignat*}{4}
1\colon &\;& \mi(x) \mcirc^\sharp (y\mcirc z) &\to x \mcirc^\sharp (\mi(\mi(y))\mcirc z) & \hspace{5ex}
2\colon &\;& \mi(x) \mcirc^\sharp (y\mcirc z) & \to \mi(\mi(y))\mcirc z
\\
3\colon &\;& \mi(x) \mcirc^\sharp (y\mcirc(z\mcirc w)) &\rew x \mcirc^\sharp (z\mcirc(y\mcirc w)) &
4\colon &\;&\mi(x) \mcirc^\sharp (y\mcirc(z\mcirc w)) &\rew z\mcirc^\sharp (y\mcirc w)
\\
5\colon && \mi(x) \mcirc^\sharp (y\mcirc(z\mcirc w)) & \rew y\mcirc^\sharp w
\end{alignat*}
First, consider the reduction pair induced by the polynomial algebra $\AA$ 
defined as follows:
\begin{equation*}
  {\mcirc^\sharp_\AA}(x,y)=y \qquad {\mcirc_\AA}(x,y)=y+1 \qquad \mi_\AA(x)=0 \tpkt
\end{equation*}
An application of the processor $\PhiRP$ removes the dependency pairs $\{2,4,5\}$.
Next, we apply the reduction pair
induced by the polynomial algebra $\BB$ with 
\begin{equation*}
  {\mcirc^\sharp_\BB}(x,y)=x \qquad {\mcirc_\BB}(x,y)=0 \qquad \mi_\BB(x)=x+1 \tkom
\end{equation*}
which removes the remaining pairs $\{1,3\}$. Hence we conclude termination of
$\RS$.

As a corollary we see that the derivational complexity function
$\Dc{\RSdieter}$ is bounded from below by a function that grows faster
than any primitive recursive function. On the other hand the complexity
induced by the base techniques is linear. 
% More precisely, let $(\gord[\succ]{\AA},\geqord[\succcurlyeq]{\AA})$ 
% and $(\gord[\succ]{\BB},\geqord[\succcurlyeq]{\BB})$ denote the reduction
% pairs employed in the above termination proof.
Let $\PP_1$ denote the set
of dependency pairs  $\{2,4,5\}$ and let $\PP_2 = \{1,3\}$. It is easy to infer
from the polynomial algebras $\AA$ and $\BB$ employed in the two applications of
$\PhiRP$ that the
derivation height functions $\dheight(t^\sharp,\rsrew{\PP_1/(\PP_2 \cup \RS)})$ 
and  $\dheight(t^\sharp,\rsrew{\PP_2/\RS})$ are linear in $\size{t}$.

In order to obtain a tight lower bound we generalise Example~\ref{ex:intro}
\begin{example}
\label{ex:subtermlowerbound}
Let $k \geqslant 2$ and consider the following schematic rewrite
rules, denoted as $\RSpeter{k}$. It is easy to see that for fixed
$k$, the TRS $\RSpeter{k}$ encodes the $k$-ary Ackermann function:
\begin{align*}
\Ack_k(\Null,\ldots,\Null,n)&\rew\ms(n)\\
\Ack_k(l_1,\ldots,l_{k-2},\ms(m),\Null)&\rew\Ack_k(l_1,\ldots,l_{k-2},m,\ms(\Null))\\
\Ack_k(l_1,\ldots,l_{k-2},\ms(m),\ms(n))&\rew\Ack_k(l_1,\ldots,l_{k-2},m,\Ack_k(l_1,\ldots,l_{k-2},\ms(m),n))\\
\Ack_k(l_1,\ldots,l_{i-1},\ms(l_i),\Null,\ldots,\Null,n)&\rew\Ack_k(l_1,\ldots,l_i,n,\Null,\ldots,\Null,n)
\end{align*}
Here, the last rule is actually a rule schema which is instantiated for all $1\leqslant i\leqslant k-2$.
\end{example}

Following of the pattern of the termination proof of $\RSack$, we show termination of 
$\RSpeter{k}$ by $k$ applications of processor~$\PhiSC$. The next lemma is a
direct consequence of the above considerations and Proposition~\ref{prop:mreccoverage}.
\begin{lemma}
\label{l:lowerbound}
For any multiple recursive function $f$, there exists a
TRS $\RS$ whose derivational complexity function $\Dc{\RS}$ majorises $f$. 
Furthermore termination of $\RS$ follows by an application of Theorem~\ref{t:termination}.
\end{lemma}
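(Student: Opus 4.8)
The plan is to establish the lemma by combining the explicit family $\RSpeter{k}$ from Example~\ref{ex:subtermlowerbound} with the coverage result for multiple recursive functions, Proposition~\ref{prop:mreccoverage}. The key observation is that the statement has two separate obligations: a lower bound on derivational complexity (the function $\Dc{\RS}$ must majorise the given $f$), and a termination guarantee that fits the hypotheses of Theorem~\ref{t:termination} (termination provable using only reduction pair, dependency graph, or subterm criterion processors). I would treat these two obligations in turn, and the bulk of the work lies in verifying the first.

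First I would fix an arbitrary multiple recursive function $f$. By Proposition~\ref{prop:mreccoverage} there exists a $k \geqslant 2$ such that the $k$-ary Ackermann function $A_k$ asymptotically dominates $f$. I then claim that $\RS \defsym \RSpeter{k}$ is the desired TRS. To see that $\Dc{\RSpeter{k}}$ majorises $f$, I would show that $\RSpeter{k}$ faithfully simulates the recursive definition of $A_k$: for suitable ground input terms $t_{n_1},\ldots,t_{n_k}$ encoding naturals $n_1,\ldots,n_k$ (with $\Null$ and $\ms$ as the numeral constructors), the term $\Ack_k(t_{n_1},\ldots,t_{n_k})$ rewrites to the numeral for $A_k(n_1,\ldots,n_k)$, and moreover the \emph{number} of rewrite steps needed is at least $A_k(n_1,\ldots,n_k)$. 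The cleanest way to extract a length lower bound is to note that each of the four rule schemata mirrors one defining equation of $A_k$, so the derivation length counts (at least) the number of unfoldings of the recursion; since each call ultimately produces successor symbols matching the final value, the derivation height is bounded below by something on the order of $A_k(n_1,\ldots,n_k)$. Choosing the initial term of size linear in $n$ (e.g. taking most arguments fixed and one argument the numeral for $n$), I obtain $\Dc{\RSpeter{k}}(\size{t}) \geqslant A_k(\ldots) \geqslant f(\ldots)$ for large inputs, which is exactly majorisation.

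Second I would discharge the termination obligation. As indicated in the text immediately following Example~\ref{ex:subtermlowerbound}, termination of $\RSpeter{k}$ is shown by $k$ successive applications of the subterm criterion processor $\PhiSC$, generalising the two-step proof of $\RSack$ in Example~\ref{ex:intro}. Concretely, I would compute $\DP(\RSpeter{k})$ and then describe the sequence of simple projections: projecting on the arguments of $\Ack_k^\sharp$ in a decreasing order $\pi_{k-1}, \pi_{k-2}, \ldots$ (mirroring the lexicographic descent of the Ackermann recursion) so that each application of $\PhiSC$ strictly decreases one argument position under $\prsuperterm$ while leaving the others weakly decreasing, eventually yielding the trivially finite problem $(\varnothing,\RSpeter{k})$. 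This gives a proof tree all of whose edges are labelled by $\PhiSC$, so Theorem~\ref{t:termination} applies and $\RSpeter{k}$ is terminating.

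The main obstacle I anticipate is the careful bookkeeping in the first part: establishing that the derivation \emph{length} (not merely the result) of the simulation is at least $A_k$, and doing so uniformly in $k$ rather than just for a fixed small $k$. The recursion for $A_k$ has nested calls (the third equation substitutes a recursive call into an argument position), so a naive count must be replaced by an induction on the lexicographic order of the arguments, tracking how many rule applications each recursive unfolding contributes. I would handle this by an induction establishing a lower bound of the form $\dheight(\Ack_k(t_{n_1},\ldots,t_{n_k}), \rsrew{\RSpeter{k}}) \geqslant A_k(n_1,\ldots,n_k)$, using the inductive hypothesis on the strictly smaller arguments appearing in the recursive calls. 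The subterm-criterion termination argument is comparatively routine once the right sequence of projections is identified, since the schematic rules were evidently designed so that each projection exposes exactly one descending argument.
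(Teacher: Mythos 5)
Your overall route coincides with the paper's: the paper likewise obtains the lemma as a direct consequence of Example~\ref{ex:subtermlowerbound} (the family $\RSpeter{k}$ encoding the $k$-ary Ackermann function) together with Proposition~\ref{prop:mreccoverage}, and it proves termination of $\RSpeter{k}$ by $k$ applications of $\PhiSC$ following the pattern of the proof for $\RSack$. Your added induction establishing $\dheight(\Ack_k(t_{n_1},\ldots,t_{n_k}),\rsrew{\RSpeter{k}})\geqslant A_k(n_1,\ldots,n_k)$ merely makes explicit what the paper compresses into the phrase ``encodes the $k$-ary Ackermann function'', and is the right way to fill that in.

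However, one concrete step of your termination plan fails as stated: the simple projections must be applied in \emph{increasing} order $\pi_1,\pi_2,\ldots,\pi_k$, not in the decreasing order $\pi_{k-1},\pi_{k-2},\ldots$ you propose. Recall that $\PhiSC$ requires \emph{all} pairs of the current DP problem to be weakly decreasing under the chosen projection. The dependency pair arising from the last rule schema with parameter $i$ has $\Null$ at argument position $i+1$ on the left but the variable $n$ there on the right, and $\Null\superterm n$ does not hold; hence $\pi_{i+1}$ is inapplicable while that pair is still present, and that pair can only be removed (strictly) by $\pi_i$. In particular, opening with $\pi_{k-1}$ fails immediately against the schema instance $i=k-2$. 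Similarly, the pair from the second rule and the outer pair from the third rule place $\ms(\Null)$ respectively a nested $\Ack_k$-term at position $k$ of their right-hand sides, so $\pi_k$ only becomes applicable after $\pi_{k-1}$ has removed them. Thus the lexicographic descent of the recursion is mirrored by the increasing order of projections --- exactly as in Example~\ref{ex:intro}, where $\pi_1$ precedes $\pi_2$. With this correction your argument goes through and matches the paper's proof.
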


Lemma~\ref{l:lowerbound} shows that the DP \emph{framework}  admits much higher derivational 
complexities than the basic DP method. In~\cite{MS09,MS10jour} we show that the
derivational complexity induced by the DP method is \emph{primitive recursive} in the
complexity induced by the base technique, even if standard refinements like \emph{usable rules}
or \emph{dependency graphs} are considered. Examples~\ref{ex:redpairlowerbound} 
and~\ref{ex:subtermlowerbound} show that we cannot
hope to achieve such a bound in the context of the DP framework. 
In the remainder of this paper we show that
jumping to the next function class, the multiple recursive functions,
suffices to bound the induced derivational complexities.
\begin{definition}
Let $\RS$ be a TRS whose termination is shown via Theorem~\ref{t:termination}, and $\PT$
the proof tree employed by the theorem. Let $k$ be the maximum number of SCCs in any
dependency graph employed by any instance of $\PhiDG$ occurring in $\PT$, and let
$g \colon \N \to \N$ denote a monotone function such that:
\begin{align*}
  g(n) &\geqslant \max (\{k\} \cup \{ \dheight(t^\sharp,\rsrew{(\PP \setminus \QQ) / (\RS \cup \QQ)}) \mid
  \text{there exists an edge from $(\PP,\RS)$ to $(\QQ,\RS)$} \\
  &\qquad \text{in $\PT$ labelled by an instance of $\PhiRP$
  and $\size{t} \leqslant n$}\}) \tpkt
\end{align*}
Then $g$ is called a \emph{reduction pair function} of $\RS$ with respect to $\PT$.
\end{definition}

Note that some reduction pair function can often be computed just by inspection
of the employed instances of $\PhiRP$. Moreover, for most of the known reduction
pairs (in particular, for virtually all reduction pairs currently applied by
automatic termination provers), it is easily possible to compute a multiply
recursive reduction pair function.

\begin{theorem}[Main Theorem]
\label{t:main}
Let $\RS$ be a TRS whose termination is shown via Theorem~\ref{t:termination} and
let the reduction pair function $g$ of $\RS$ be multiple recursive. Then
the derivational complexity function $\Dc{\RS}$ with respect to $\RS$ is
bounded by a multiple recursive function. Furthermore this upper bound is tight.
\end{theorem}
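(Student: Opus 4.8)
The tightness half of the statement is already secured: Lemma~\ref{l:lowerbound} produces, for every multiple recursive $f$, a TRS whose derivational complexity majorises $f$ and whose termination follows from Theorem~\ref{t:termination}, and the remarks following Examples~\ref{ex:redpairlowerbound} and~\ref{ex:subtermlowerbound} show this even for base techniques of linear complexity. I therefore concentrate on the upper bound. Given $\RS$, its proof tree $\PT$, and a multiple recursive reduction pair function $g$, the plan is to reduce the \emph{global} quantity $\dheight(t,\rsrew{\RS})$ to a measure attached to the root DP problem $(\DP(\RS),\RS)$ and then to propagate bounds down the finitely many levels of $\PT$. First I would attach to every node $(\PP,\RS)$ of $\PT$ a complexity measure, namely the maximal length of a $(\PP,\RS)$-chain issuing from a dependency-pair term of bounded size, and relate $\Dc{\RS}$ to the measure of the root: a maximal $\RS$-derivation from $t$ can be traced by the dependency pairs it triggers, so its length is controlled by the chain complexity of $(\DP(\RS),\RS)$ together with the sizes of the terms passed through. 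The leaves $(\varnothing,\RS)$ carry trivial measure, so what remains is to bound the measure of a node in terms of those of its children and the labelling processor.

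For this local step I would treat the three processors separately, using exactly the data that $g$ records. For an edge labelled by $\PhiRP$ from $(\PP,\RS)$ to $(\QQ,\RS)$, the strictly oriented pairs $\PP\setminus\QQ$ occur in a chain at most $\dheight(t^\sharp,\rsrew{(\PP\setminus\QQ)/(\RS\cup\QQ)})\leqslant g(n)$ times, so a $(\PP,\RS)$-chain splits into at most $g(n)+1$ segments that are $(\QQ,\RS)$-chains. For $\PhiDG$ the dependency graph decomposes a chain into at most $k\leqslant g(n)$ sub-chains, each eventually confined to one SCC, so the parent measure is assembled from those of the at most $k$ children. For $\PhiSC$, a standard argument shows that along a chain the simple projection $\pi$ yields a sequence that is non-increasing in the subterm order and strictly decreases ($\prsuperterm$) at every step using $\PP\setminus\QQ$; hence the number of strict steps is bounded by the depth, and so by the size, of the term at which the chain is entered. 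In each case the measure of a node is obtained from those of its children by a primitive recursive operation whose parameters are $n$, $g(n)$, and the sizes of the terms reached along the chain.

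The genuine difficulty—and the reason the bound is multiple and not merely primitive recursive—is that these term sizes are not controlled by the size of the initial term: inside a chain one meets subterms $t_i\sigma$ with $t_i\sigma\rssrew{\RS}s_{i+1}\tau$, and the intervening $\RS$-reductions and substitutions can blow the terms up. Since a single rewrite step enlarges a term by at most a constant, any term reached after $m$ steps has size at most $\size{t}+c\cdot m$; feeding this back into the subterm-criterion estimate yields a recurrence in which the complexity at one level refers to the complexity already accumulated, and unfolding this self-reference over the $d$ levels of $\PT$ is precisely what converts a linear base bound into Ackermannian growth. To make this rigorous I would follow the three-step programme announced in the introduction: bookmark the dependency pairs actually active along a derivation; build from $\PT$ a simulating TRS $\RSsim$ that interprets $\RS$ via a size-respecting encoding $\phi$ with $\size{\phi(t)}\leqslant p(\size{t})$ for some primitive recursive $p$, so that $\dheight(t,\rsrew{\RS})\leqslant\dheight(\phi(t),\rsrew{\RSsim})$; and finally bound $\Dc{\RSsim}$ directly. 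The construction of $\RSsim$ arranges the $d$ levels of $\PT$ into $d$ nested blocks of recursion, and a compatible interpretation then bounds $\Dc{\RSsim}$ by a multiple recursive function obtained by a $cd$-fold nested recursion over $g$. As the multiple recursive functions contain every $A_k$ and are closed under composition and primitive recursion (Proposition~\ref{prop:mreccoverage}), and $g$ is multiple recursive by hypothesis, we conclude $\Dc{\RS}(n)\leqslant\Dc{\RSsim}(p(n))$ is multiple recursive.

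The step I expect to be the main obstacle is exactly this coupling between chain length and term size: the processors supply only local, size-relative decrease information, whereas $\Dc{\RS}$ is measured globally against the initial size. Consequently the technical heart of the argument lies in defining $\RSsim$ and proving both that it faithfully simulates $\RS$ and that it admits a multiple recursive complexity bound.
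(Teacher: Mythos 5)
Your tightness argument is exactly the paper's (Lemma~\ref{l:lowerbound}), and your three-step plan for the upper bound coincides with the strategy the paper announces; the problem is that the proposal stops precisely where the proof has to do its work, and two of the concrete claims you do commit to are false. For $\PhiSC$ you assert that the number of strict $\prsuperterm$-steps along a chain is bounded by the depth of the term at which the chain is entered. This is wrong: between dependency pair steps the projected argument is rewritten by $\RS$ (via $t_i\sigma\rssrew{\RS}s_{i+1}\tau$) and may grow arbitrarily --- in $\RSack$ with projection $\pi_2$ the second argument is first rewritten to an Ackermannially large numeral and only then peeled off by subterm steps, which is exactly how that system attains its derivation height despite satisfying the subterm criterion. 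Likewise ``a single rewrite step enlarges a term by at most a constant'' fails for duplicating rules such as $f(x)\rew g(x,x)$: growth per step is multiplicative, not additive, so the estimate $\size{t}+c\cdot m$ after $m$ steps is unsound. The paper sidesteps both problems by never counting at this level: it defines the vector $\norm(t)$ of measures along the current path of $\PT$, taking values in $\N\cup\TERMS\cup\{\bot\}$ and comparing term-valued entries by the well-founded relation $({\rsrew{\RS}}\cup{\prsuperterm})^+$ (precisely because a depth bound is unavailable), and proves the two key facts your sketch lacks: weak $\geqnormlex$-decrease under non-root steps (Lemma~\ref{lem:weaklex}) and strict $\gnormlex$-decrease from $s$ to every created subterm of $t$ under root steps (Lemma~\ref{lem:strictlex}).

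The decisive quantitative steps are also missing. First, you posit an encoding $\phi$ with $\size{\phi(t)}\leqslant p(\size{t})$ for primitive recursive $p$; but the natural interpretation $\tr(t)$ (Definition~\ref{def:rstorssim}) stores the norms --- among them relative derivation heights --- as numerals, so $\size{\tr(t)}$ is not bounded by any fixed primitive recursive function of $\size{t}$. The paper needs a separate device to repair this: rules $11$ and $12$ together with Lemma~\ref{lem:rssimulationstart} show that the small term $\mh^{\depth{t}}(\mz)$, of size at most $\size{t}$, rewrites in $\RSsim$ to $\tr(t)$, and only this yields $\Dc{\RS}(n)\leqslant\Dc{\RSsim}(n)$. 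Second, for the bound on $\Dc{\RSsim}$ you offer only ``a compatible interpretation \dots\ a $cd$-fold nested recursion over $g$'' with no construction; this is the entire content of the upper bound, not a detail one may defer. The paper's resolution is structural: $\RSsim$ is designed so that its rules consume the $d$ norm positions lexicographically, the subsystem computing $g$ is LPO-terminating, and the whole of $\RSsim$ is compatible with a lexicographic path order, whereupon Weiermann's theorem \cite{W95} that LPO-termination implies multiply recursive derivational complexity gives Lemma~\ref{lem:rssimmultrec}. Without either that appeal (or an explicitly constructed multiply recursive interpretation, which you would then have to verify), your argument does not deliver the claimed bound.
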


The proof of Theorem~\ref{t:main} makes use of a combinatorial argument, and given
in the next section. Here we present the proof plan. In proving the theorem we essentially
use three different ideas. First, we exploit the given proof tree. We observe that, if we
restrict our attention to termination of terms, we can focus on specific branches of the
proof tree.
Secondly, we define a TRS $\RSsim$ simulating
the initial TRS $\RS$: $s \rsrew{\RS} t$ implies $\tr(s) \rstrew{\RSsim} \tr(t)$.
Here $\tr$ denotes a suitable interpretation of terms into the signature of the simulating
TRS $\RSsim$, compare Definition~\ref{d:RSsim}. The term $\tr(t)$ aggregates the
termination arguments for $t$ given by the DP processors in part of the proof tree
which has been identified as particularly relevant for $t$ in the first step.
Finally, $\RSsim$ will be
simple enough to be compatible with a LPO so that we can employ Weiermann's result in~\cite{W95} to deduce
a multiple recursive upper bound on the derivational complexity
with respect to $\RSsim$ and conclusively with respect to $\RS$.

Note that, while technically involved, our proof technique is
conceptually simpler than the technique we used in \cite{MS09jour} to show a
triple exponential upper complexity bound on the most basic version of the
dependency pair method: in order to establish the much lower bound
in~\cite{MS09jour}, we constructed the whole proof argument from scratch,
while in this paper, the largest part of the proof's conceptual complexity
is hidden within the termination proof of the simulating TRS by a LPO.

\section{Proof of the Main Result}
\label{sec:modular}

In this section we prove our main result, Theorem~\ref{t:main}. We start with
some preliminary definitions.
Let $\RS$ denote a TRS. We assume without loss of generality for each
considered termination proof that whenever a DP processor $\Phi$ is applied
to a DP problem $(\PP,\RS)$, then $\Phi((\PP,\RS))\neq\{(\PP,\RS)\}$.

Let $\GG$ be a dependency graph; we order the (trivial and nontrivial) SCCs
of $\GG$ by assigning a \emph{rank} to each of them. Let $\PP$, $\QQ$
denote two distinct SCCs of $\GG$. We call $\QQ$ \emph{reachable} from
$\PP$ if there exist nodes $u \in \PP$, $v \in \QQ$ and a path in $\GG$
from $u$ to $v$. Let $k$ be the number of SCCs in $\GG$. Consider a
bijection $\rk(\GG,\cdot)$ from the set of SCCs of $\GG$ to $\{1,\ldots,k\}$
such that $\rk(\GG,\PP)>\rk(\GG,\QQ)$ whenever $\QQ$ is reachable from $\PP$ in $\GG$.
We call $\rk(\GG,\PP)$ the \emph{rank} of an SCC $\PP$ in $\GG$.

The rank of a dependency pair $l\rew r$, denoted by $\rk(\GG,l\rew r)$,
is the rank of $\PP$ in $\GG$ such that ${l\rew r} \in \PP$.
Finally, the rank of a term $t$ such that $t^\sharp\not\in\NF(\PP/\RS)$ for some SCC
$\PP$ of $\GG$ is defined by $\rk(\GG,t) \defsym \max\{\rk(\GG,l\rew r)\mid
  \exists\sigma\;{t^\sharp} \rssrew{\RS} {l\sigma}\}$.
% \begin{equation*}
%   \rk(\GG,t) \defsym \max\{\rk(\GG,l\rew r)\mid
%   \exists\sigma\;{t^\sharp} \rssrew{\RS} {l\sigma}
%   % \text{there exists $\sigma$ such that ${(\atpos{u}{p})^\sharp} \rssrew{\RS} {s\sigma}$}
%   \}
%  \tpkt
% \end{equation*}
%
Observe that $\rk(\GG,t)$ need not be defined, although $t$ has a redex at the root position.
This is due to the fact that this redex need not be governed by a dependency pair.
On the other hand observe that if $t \not \in \NF(\PP/\RS)$ for
some SCC $\PP$ of $\GG$, then $\rk(\GG,t)$ is defined. Furthermore in this
case $\rk(\GG,t) > 0$ and $\dheight(t^\sharp,\rsrew{\PP/\RS}) > 0$.

\begin{definition}
\label{d:prooftree}
We redefine \emph{proof trees}. A proof tree $\PT$ of $\RS$ is a tree satisfying:
\begin{enumerate}
\item The nodes of $\PT$ are DP problems. \label{d:prooftree:i1}
\item The DP problem $(\DP(\RS),\RS)$ is the root of $\PT$. \label{d:prooftree:i2}
\item For every inner node $(\PP,\RS)$ in $\PT$, there exists a sound DP processor $\Phi$
such that for each DP problem $(\QQ,\RS)\in\Phi((\PP,\RS))$, there exists an
edge from $(\PP,\RS)$ to $(\QQ,\RS)$ in $\PT$ labelled by $\Phi$. \label{d:prooftree:i3}
\item Further, suppose $\Phi = \PhiDG$. Then there exists an edge from
$(\PP,\RS)$ to a leaf $(\QQ,\RS)$ (labelled by $\Phi$) for every trivial
SCC $\QQ$ of $\DG(\PP,\RS)$.
Moreover the successors of $(\PP,\RS)$ are ordered from left to
right in decreasing order with respect to the function $\rk$. \label{d:prooftree:i4}
\item Every leaf is either of the shape $(\varnothing,\RS)$, or it is
mandated to be a leaf by Item~\ref{d:prooftree:i4}. \label{d:prooftree:i5}
\item $\PT$ only contains edges mandated by Items~\ref{d:prooftree:i3} and \ref{d:prooftree:i4}.
\end{enumerate}
\end{definition}

The positions of nodes in $\PT$ are defined as usual as finite
sequences of numbers. We write Greek letters for positions in $\PT$.
It is easy to verify that there is a one-to-one correspondence between
proof trees according to Section~\ref{sec:preliminaries} and
Definition~\ref{d:prooftree}.

\begin{example}
\label{ex:prooftree}
Consider the TRS $\RSsup$ given by the rewrite rules
\begin{align*}
\md(\Null) &\rew \Null
& \me(\ms(x),y) &\rew \me(x,\md(y))\\
\md(\ms(x)) &\rew \ms(\ms(\md(x)))
& \msup(\ms(x),\me(\Null,y)) &\rew \msup(x,\me(y,\ms(\Null)))
\end{align*}
and the following termination proof of $\RSsup$.
The dependency pairs $\DP(\RSsup)$ of $\RSsup$ are:
\begin{alignat*}{8}
& & & &
1\colon&\;& \md^\sharp(\ms(x)) &\rew \md^\sharp(x)\\
2\colon&\;& \me^\sharp(\ms(x),y) &\rew \md^\sharp(y)
&\hspace{5ex} 3\colon&\;& \me^\sharp(\ms(x),y) &\rew \me^\sharp(x,\md(y))\\
4\colon&\;& \msup^\sharp(\ms(x),\me(\Null,y)) &\rew \me^\sharp(y,\ms(\Null))
&\hspace{5ex} 5\colon&\;& \msup^\sharp(\ms(x),\me(\Null,y)) &\rew \msup^\sharp(x,\me(y,\ms(\Null)))
\end{alignat*}
We start with the dependency graph processor $\PhiDG$. The dependency graph
of the initial DP problem $(\DP(\RSsup),\RSsup)$ contains three
nontrivial SCCs $\{1\}$, $\{3\}$, and $\{5\}$, and two trivial SCCs $\{2\}$
and $\{4\}$. Finiteness of each of the nontrivial SCCs can be shown by the
reduction pair processor $\PhiRP$ employing the following linear polynomial
algebra $\A$:
\begin{align*}
\md_\A(x)&=2x &\quad
\me_\A(x,y)&=0 &\quad
\msup_\A(x,y)&=0 &\quad
\ms_\A(x)&=x+1\\
\Null_\A&=0 &\quad
\md_\A^\sharp(x)&=x &\quad
\me_\A^\sharp(x,y)&=x &\quad
\msup_\A^\sharp(x,y)&=x
\end{align*}
Figure~\ref{fig:prooftree} shows the proof tree $\PT$ of this termination
proof, where we make use of a simplified notation for edge labels.
The nodes at positions $11$, $31$, and $51$ are leaves in this proof tree
because they are labelled by the DP problem $(\varnothing,\RSsup)$, which
is trivially finite. The nodes at positions $2$ and $4$ are leaves because
$\{4\}$ and $\{2\}$ are trivial SCCs of the dependency graph employed.
\end{example}

\begin{figure}[h]
\begin{center}
\begin{tikzpicture}[node distance=2.5cm]
\node (proot) {$(\DP(\RSsup),\RSsup)$};
\node[below of=proot,node distance=1.5cm] (scc3) {$(\{3\},\RSsup)$};
\draw[->] (proot) to (scc3);
\node[left of=scc3] (scc4) {$(\{4\},\RSsup)$};
\draw[->] (proot) to (scc4);
\node[left of=scc4] (scc5) {$(\{5\},\RSsup)$};
\draw[->] (proot) to (scc5);
\node[right of=scc3] (scc2) {$(\{2\},\RSsup)$};
\draw[->] (proot) to (scc2);
\node[right of=scc2] (scc1) {$(\{1\},\RSsup)$};
\draw[->] (proot) to (scc1);
\node[below of=scc5,node distance=1.5cm] (e5) {$(\varnothing,\RSsup)$};
\draw[->] (scc5) to (e5);
\node[below of=scc3,node distance=1.5cm] (e3) {$(\varnothing,\RSsup)$};
\draw[->] (scc3) to (e3);
\node[below of=scc1,node distance=1.5cm] (e1) {$(\varnothing,\RSsup)$};
\draw[->] (scc1) to (e1);
\draw[color=blue,rounded corners=0.2cm] ($(scc3.north -| e5.west)+(0,0.2)$) rectangle ($(proot.south -| e1.east)-(0,0.2)$) node (dgloc) {};
\node[anchor=north east] (dglab) at (dgloc) {\footnotesize{$\PhiDG$}};
\draw[color=blue,rounded corners=0.2cm] ($(e5.north -| e5.west)+(0,0.2)$) rectangle ($(scc5.south -| e5.east)-(0,0.2)$) node (a5loc) {};
\node[anchor=north east] (a5lab) at (a5loc) {\footnotesize{$\PhiRP$}};
\draw[color=blue,rounded corners=0.2cm] ($(e3.north -| e3.west)+(0,0.2)$) rectangle ($(scc3.south -| e3.east)-(0,0.2)$) node (a3loc) {};
\node[anchor=north east] (a3lab) at (a3loc) {\footnotesize{$\PhiRP$}};
\draw[color=blue,rounded corners=0.2cm] ($(e1.north -| e1.west)+(0,0.2)$) rectangle ($(scc1.south -| e1.east)-(0,0.2)$) node (a1loc) {};
\node[anchor=north east] (a1lab) at (a1loc) {\footnotesize{$\PhiRP$}};
\end{tikzpicture}
\end{center}
\caption{A proof tree of $\RSsup$}
\label{fig:prooftree}
\end{figure}

For the remainder of this section, we assume that termination of $\RS$
is shown Theorem~\ref{t:termination} employing a proof tree $\PT$. Further
suppose that there exists a multiply recursive reduction pair function of
$\RS$, and fix such a function $g$. Let $d$ be the depth of $\PT$ plus one.

As stated in the proof plan, we now determine which part of the termination proof
is active with respect to a given term.
%
% Then we use the processors on this
% path in the tree to construct a measure with respect to that term.
% We show that this measure decreases under rewriting. As rewriting is closed
% under contexts, we have to consider this measure for each subterm of the
% given term. The interpretation $\tr$ collects these measures in such a way
% that we can define a TRS $\RSsim$ with the following two properties:
% $s\rsrew{\RS}t$ implies $\tr(s)\rstrew{\RSsim}\tr(t)$, and
% the derivational complexity function $\Dc{\RSsim}$ is easily shown to be
% multiple recursive.
%
% We start by defining the active part of the proof tree
% more formally.
%
Intuitively, for many terms, only a part of $\PT$ is relevant for showing
termination of that particular term. More specifically, for any term $t$, only
a certain subset of the dependency pairs can be used for rewriting $t^\sharp$.
Of these dependency pairs, we view the one occurring in the leftmost positions
of $\PT$ (with respect to the order of $\PT$) as the \emph{current dependency
pair}. We call the set of positions in which the current dependency pair
occurs, the \emph{current path of $t$ in $\PT$}.

\begin{example}[continued from Example~\ref{ex:prooftree}]
\label{ex:ptpath}
Consider the terms $t_1=\msup(\ms(\Null),\me(\Null,\ms(\Null)))$,
$t_2=\msup(\Null,\me(\ms(\Null),\ms(\Null)))$, and
$t_3=\me(\ms(\Null),\ms(\Null))$. We obtain the following derivations:
\begin{equation*}
  t_1^\sharp \rsrew{\DP(\RSsup)} t_2^\sharp \qquad
  t_1^\sharp \rsrew{\DP(\RSsup)} t_3^\sharp \tpkt
\end{equation*}
Hence the term $t_1^\sharp$ is not a normal form with respect to $\{5\}/\RSsup$ and
$\{4\}/\RSsup$. Similarly $t_3^\sharp$ is not a normal form with respect to
$\{3\}/\RSsup$ and $\{2\}/\RSsup$. Therefore, the parts of $\PT$ highlighted in
Figure~\ref{fig:ptpath} are particularly relevant for $t_1$ and $t_3$,
respectively. The term $t_2^\sharp$ is a normal form with respect to $\DP(\RSsup)/\RSsup$,
therefore no part of the proof tree is relevant to show termination for $t_2$.
The leftmost branch
relevant to $t_1$ (and hence the set of positions of $t_1$ in that proof tree)
is $(\epsilon,1)$, therefore dependency pair $5$ is the current dependency pair
of $t_1$. Similarly, the current dependency pair of $t_3$ is $3$, and the set
of positions of $t_3$ is $(\epsilon,3)$.
\end{example}

\begin{figure}[h]
\begin{center}
\begin{tikzpicture}[scale=0.5,node distance=1.25cm]
\node (proot) {\tiny{$(\DP(\RSsup),\RSsup)$}};
\node[below of=proot,node distance=1.5cm,color=gray] (scc3) {\tiny{$(\{3\},\RSsup)$}};
\draw[->,color=gray] (proot) to (scc3);
\node[left of=scc3] (scc4) {\tiny{$(\{4\},\RSsup)$}};
\draw[->] (proot) to (scc4);
\node[left of=scc4] (scc5) {\tiny{$(\{5\},\RSsup)$}};
\draw[->] (proot) to (scc5);
\node[right of=scc3,color=gray] (scc2) {\tiny{$(\{2\},\RSsup)$}};
\draw[->,color=gray] (proot) to (scc2);
\node[right of=scc2,color=gray] (scc1) {\tiny{$(\{1\},\RSsup)$}};
\draw[->,color=gray] (proot) to (scc1);
\node[below of=scc5,node distance=1.5cm,color=gray] (e5) {\tiny{$(\varnothing,\RSsup)$}};
\draw[->,color=gray] (scc5) to (e5);
\node[below of=scc3,node distance=1.5cm,color=gray] (e3) {\tiny{$(\varnothing,\RSsup)$}};
\draw[->,color=gray] (scc3) to (e3);
\node[below of=scc1,node distance=1.5cm,color=gray] (e1) {\tiny{$(\varnothing,\RSsup)$}};
\draw[->,color=gray] (scc1) to (e1);
\draw[color=blue,rounded corners=0.3cm] ($(scc3.north -| e5.west)+(0,0.5)$) node (dgloc) {} rectangle ($(proot.south -| e1.east)-(0,0.5)$);
\node[anchor=west] (dglab) at ($(dgloc)+(0,0.6)$) {\tiny{$\PhiDG$}};
\draw[color=blue,rounded corners=0.3cm] ($(e5.north -| e5.west)+(0,0.5)$) node (a5loc) {} rectangle ($(scc5.south -| e5.east)-(0,0.5)$);
\node[anchor=west] (a5lab) at ($(a5loc)+(0,0.6)$) {\tiny{$\PhiRP$}};
\draw[color=blue,rounded corners=0.3cm] ($(e3.north -| e3.west)+(0,0.5)$) node (a3loc) {} rectangle ($(scc3.south -| e3.east)-(0,0.5)$);
\node[anchor=west,color=gray] (a3lab) at ($(a3loc)+(0,0.6)$) {\tiny{$\PhiRP$}};
\draw[color=blue,rounded corners=0.3cm] ($(e1.north -| e1.west)+(0,0.5)$) node (a1loc) {} rectangle ($(scc1.south -| e1.east)-(0,0.5)$);
\node[anchor=west,color=gray] (a1lab) at ($(a1loc)+(0,0.6)$) {\tiny{$\PhiRP$}};
\begin{pgfonlayer}{background}
\draw ($(e1.south -| scc5.west)-(0.1,0.1)$) rectangle ($(proot.north -| scc1.east)+(0.1,0.1)$);
\fill[color=yellow] (proot.north) .. controls (proot.north west) .. (proot.west)
                                     .. controls (proot.south west) and (scc5.north west) .. (scc5.west)
                                     .. controls (scc5.south west) .. (scc5.south)
                                     .. controls (scc5.south east) and (scc4.south west) .. (scc4.south)
                                     .. controls (scc4.south east) .. (scc4.east)
                                     .. controls (scc4.north east) and (proot.south east) .. (proot.east)
                                     .. controls (proot.north east) .. (proot.north);
\end{pgfonlayer}

\node[right of=proot,node distance=6.45cm] (qroot) {\tiny{$(\DP(\RSsup),\RSsup)$}};
\node[below of=qroot,node distance=1.5cm] (qscc3) {\tiny{$(\{3\},\RSsup)$}};
\draw[->] (qroot) to (qscc3);
\node[left of=qscc3,color=gray] (qscc4) {\tiny{$(\{4\},\RSsup)$}};
\draw[->,color=gray] (qroot) to (qscc4);
\node[left of=qscc4,color=gray] (qscc5) {\tiny{$(\{5\},\RSsup)$}};
\draw[->,color=gray] (qroot) to (qscc5);
\node[right of=qscc3] (qscc2) {\tiny{$(\{2\},\RSsup)$}};
\draw[->] (qroot) to (qscc2);
\node[right of=qscc2,color=gray] (qscc1) {\tiny{$(\{1\},\RSsup)$}};
\draw[->,color=gray] (qroot) to (qscc1);
\node[below of=qscc5,node distance=1.5cm,color=gray] (qe5) {\tiny{$(\varnothing,\RSsup)$}};
\draw[->,color=gray] (qscc5) to (qe5);
\node[below of=qscc3,node distance=1.5cm,color=gray] (qe3) {\tiny{$(\varnothing,\RSsup)$}};
\draw[->,color=gray] (qscc3) to (qe3);
\node[below of=qscc1,node distance=1.5cm,color=gray] (qe1) {\tiny{$(\varnothing,\RSsup)$}};
\draw[->,color=gray] (qscc1) to (qe1);
\draw[color=blue,rounded corners=0.3cm] ($(qroot.south -| qscc5.west)-(0,0.5)$) rectangle ($(qscc3.north -| qscc1.east)+(0,0.5)$) node (qdgloc) {};
\node[anchor=east] (qdglab) at ($(qdgloc)+(0,0.6)$) {\tiny{$\PhiDG$}};
\draw[color=blue,rounded corners=0.3cm] ($(qscc5.south -| qscc5.west)-(0,0.5)$) rectangle ($(qe5.north -| qscc5.east)+(0,0.5)$) node (qa5loc) {};
\node[anchor=east,color=gray] (qa5lab) at ($(qa5loc)+(0,0.6)$) {\tiny{$\PhiRP$}};
\draw[color=blue,rounded corners=0.3cm] ($(qscc3.south -| qscc3.west)-(0,0.5)$) rectangle ($(qe3.north -| qscc3.east)+(0,0.5)$) node (qa3loc) {};
\node[anchor=east] (qa3lab) at ($(qa3loc)+(0,0.6)$) {\tiny{$\PhiRP$}};
\draw[color=blue,rounded corners=0.3cm] ($(qscc1.south -| qscc1.west)-(0,0.5)$) rectangle ($(qe1.north -| qscc1.east)+(0,0.5)$) node (qa1loc) {};
\node[anchor=east,color=gray] (qa1lab) at ($(qa1loc)+(0,0.6)$) {\tiny{$\PhiRP$}};
\begin{pgfonlayer}{background}
\draw ($(qe1.south -| qscc5.west)-(0.1,0.1)$) rectangle ($(qroot.north -| qscc1.east)+(0.1,0.1)$);
\fill[color=yellow] (qroot.north) .. controls (qroot.north west) .. (qroot.west)
                                     .. controls (qroot.south west) and (qscc3.north west) .. (qscc3.west)
                                     .. controls (qscc3.south west) .. (qscc3.south)
                                     .. controls (qscc3.south east) and (qscc2.south west) .. (qscc2.south)
                                     .. controls (qscc2.south east) .. (qscc2.east)
                                     .. controls (qscc2.north east) and (qroot.south east) .. (qroot.east)
                                     .. controls (qroot.north east) .. (qroot.north);
\end{pgfonlayer}
\end{tikzpicture}
\end{center}
\caption{The relevant parts of the proof tree for two terms}
\label{fig:ptpath}
\end{figure}
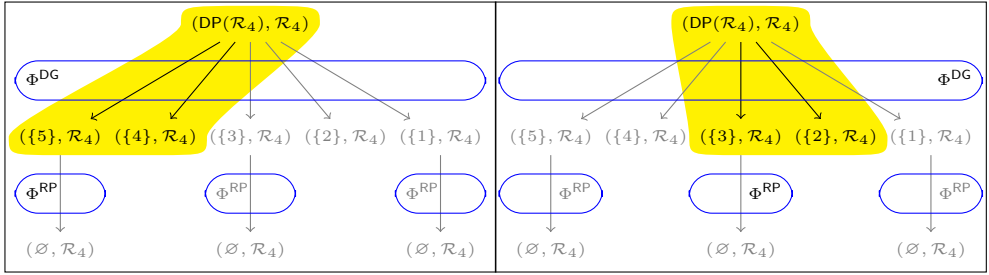

The next definition formalises the relevant parts of a proof tree. As
mentioned above it suffices to restrict the notion to a single path.
\begin{definition}
\label{def:ithposition}
The \emph{current path} $\PTpath(t)$ of a term $t$ in $\PT$ is defined as
follows. If $t^\sharp\in\NF(\DP(\RS)/\RS)$, then $\PTpath(t)$ is the
empty path, denoted as $()$. Otherwise, for each dependency pair $l\rew r$ such
that $t^\sharp\notin\NF(\{l\rew r\}/\RS)$, consider the set of nodes whose
label contains $l\rew r$. By previous observations, each of these sets forms a
path starting at the root node of $\PT$. The set of positions forming the leftmost of
these paths is $\PTpath(t)$. We use $\PTpath_i(t)$ to project on single elements of
$\PTpath(t)=(\alpha_1=\epsilon,\alpha_2,\ldots,\alpha_n)$: if $i>n$, then
$\PTpath_i(t)=\bot$, otherwise $\PTpath_i(t)=\alpha_i$.
\end{definition}

\begin{example}[continued from Example~\ref{ex:ptpath}]
\label{ex:ithposition}
The current paths of $t_1$, $t_2$, and $t_3$ are the following: we have
$\PTpath(t_1)=(\epsilon,1)$, $\PTpath(t_2)=()$, and
$\PTpath(t_3)=(\epsilon,3)$. For $t_1$, the projections on the single
elements of the path are the following: $\PTpath_1(t_1)=\epsilon$,
$\PTpath_2(t_1)=1$, and $\PTpath_i(t_1)=\bot$ for $i>2$.
\end{example}

Using the DP processors applied to the nodes of $\PTpath(t)$, we now define the
complexity measure $\norm(t)$ assigned to $t$. This measure is a vector of bounds
$\norm_{i}(t)$ obtained using the DP processors associated with the nodes $\PTpath_{i}(t)$
encountered on that path. For each DP processor, we use whatever value is
naturally decreasing in the termination argument of that processor in order to
get the associated bound. 
Given the reduction pair function $g$, $\norm(t)$ is easily computable.

\begin{definition}
We define the mapping $\norm_{i}\colon\TERMS\to\N\cup\TERMS\cup\{\bot\}$
for $i\in\N$ as follows: let $t\in\TERMS$ and $\alpha=\PTpath_i(t)$.
\begin{enumerate}
\item If $\alpha=\bot$, we set $\norm_{i}(t)=0$ if $\rt(t)$ is a defined symbol,
and $\norm_{i}(t)=\bot$ otherwise.
\item If $\alpha\neq\bot$ and  $(\PP,\RS)$ denotes the node at position $\alpha$ in $\PT$
such that  $(\PP,\RS)$ is a leaf, then either $\PP=\varnothing$, or
$\PP$ is a trivial SCC of a dependency graph. In both cases, we set
$\norm_{i}(t)=\dheight(t^\sharp,\rsrew{\PP/\RS})$.
\item If $\alpha\neq\bot$ and  $(\PP,\RS)$ denotes the node at position $\alpha$ in $\PT$
such that $(\PP,\RS)$ is an inner node, then suppose $\Phi$ labels each edge starting from 
$(\PP,\RS)$:
\begin{itemize}
\item If $\Phi$ is $\PhiRP$ with $\Phi((\PP,\RS))=\{(\QQ,\RS)\}$, then set
$\norm_{i}(t)=\dheight(t^\sharp,\rsrew{(\PP\setminus\QQ)/(\QQ\cup\RS)})$.
\item If $\Phi$ is $\PhiDG$ using a dependency graph
$\GG$, then set $\norm_{i}(t)=\rk(\GG,t)$.
\item If $\Phi$ is $\PhiSC$ using a simple projection
$\pi$, then set $\norm_{i}(t)=\pi(t^\sharp)$.
\end{itemize}
\end{enumerate}

We extend the mappings $\norm_{i}$ to the \emph{norm} of a term, by setting:
\begin{equation*}
  \norm(t)=(\norm_{1}(t),\ldots,\norm_{d}(t)) \tpkt
\end{equation*}
\end{definition}

Note that, since $d$ is the depth of $\PT$, so $\norm_{d+1}(t)\in\{0,\bot\}$
for any term $t$.

The central idea behind the complexity measures used in the mapping $\norm$ is
that rewrite steps induce lexicographical decreases in the norm of the
considered term. 
\begin{definition}
We define the following order $\gnorm$ on $\N\cup\TERMS\cup\{\bot\}$.
We have $a \gnorm b$ if and only if one of the following properties holds:
\begin{enumerate}
\item $a\in\N$, $b\in\N$, and $a>b$
\item $a\in\TERMS$, $b\in\TERMS$, and $a ({\rsrew{\RS}}\cup{\prsuperterm})^+ b$
\item $a\in\TERMS$ and $b=0$, or $a\in\TERMS\cup\N$ and $b=\bot$
\end{enumerate}
\end{definition}

We define $\geqnorm$ to be the reflexive closure of $\gnorm$. We write
$\gnormlex$ and $\geqnormlex$ for the lexicographic extensions of
$\gnorm$ and $\geqnorm$, respectively.
Note that termination of $\RS$ implies well-foundedness of
$({\rsrew{\RS}}\cup{\prsuperterm})^+$, hence $\gnorm$ is well-founded. 

\begin{lemma}
\label{lem:equalpositions}
Let $s$ and $t$ be terms such that $s\rsnonrootrew{\RS}t$.
For all $1\leqslant i<d$, if $\PTpath_i(s)=\PTpath_i(t)$ and
$\norm_i(s)=\norm_i(t)$, then either $\PTpath_{i+1}(t)=\bot$, or
$\PTpath_{i+1}(s)=\PTpath_{i+1}(t)$.
\end{lemma}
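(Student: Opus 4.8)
The plan is to first transfer information from the step $s\rsnonrootrew{\RS}t$ to the sharped terms, and then to inspect the single DP processor sitting at the common position. Since the contracted redex lies strictly below the root, $s$ and $t$ share their root symbol and $s^\sharp\rsrew{\RS}t^\sharp$ by one step below the root. From this I extract a \emph{monotonicity of applicability}: if $t^\sharp\rssrew{\RS}l\sigma$ for a dependency pair $l\rew r$ (equivalently $t^\sharp\notin\NF(\{l\rew r\}/\RS)$), then $s^\sharp\rsrew{\RS}t^\sharp\rssrew{\RS}l\sigma$, so the same pair is applicable to $s^\sharp$. Hence every dependency pair applicable to $t^\sharp$ is applicable to $s^\sharp$, so the current dependency pair of $s$ lies weakly to the left of that of $t$ in $\PT$; together with the hypothesis $\PTpath_i(s)=\PTpath_i(t)$ this shows both current paths pass through the common node, which I denote $(\PP,\RS)$ at position $\alpha$. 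If $\PTpath_{i+1}(t)=\bot$ the claim holds trivially, so I assume $\alpha$ is an inner node and case on the processor $\Phi$ labelling its outgoing edges.

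The central case, and the only one genuinely using $\norm_i(s)=\norm_i(t)$, is $\Phi=\PhiDG$ with dependency graph $\GG$. Here the children of $\alpha$ are the SCCs of $\GG$, ordered left to right by decreasing rank, and $\norm_i(u)=\rk(\GG,u)$. I would show that for any term $u$ the leftmost applicable dependency pair lies in the SCC of maximal rank among those containing a pair applicable to $u^\sharp$; since children are ordered by decreasing rank, this maximal rank equals $\max\{\rk(\GG,l\rew r)\mid\exists\sigma\;u^\sharp\rssrew{\RS}l\sigma\}=\rk(\GG,u)$. Consequently $\PTpath_{i+1}(u)$ is exactly the position of the rank-$\rk(\GG,u)$ SCC. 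Applying this to $s$ and $t$ and using $\rk(\GG,s)=\norm_i(s)=\norm_i(t)=\rk(\GG,t)$, both select the same child, i.e.\ $\PTpath_{i+1}(s)=\PTpath_{i+1}(t)$. Note that without the norm hypothesis applicability monotonicity only yields $\rk(\GG,s)\geqslant\rk(\GG,t)$, and $s$ could descend into a strictly higher-ranked SCC; this is exactly the configuration the equality of norms rules out.

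For $\Phi\in\{\PhiRP,\PhiSC\}$ the node $\alpha$ has the single child $(\QQ,\RS)$ at position $\alpha\cdot 1$, so $\PTpath_{i+1}(s),\PTpath_{i+1}(t)\in\{\alpha\cdot 1,\bot\}$ and the only bad configuration is that $t$ descends into $\QQ$ while $s$ stops at $\alpha$. If $\PTpath_{i+1}(t)=\alpha\cdot 1$, the current dependency pair of $t$ lies in $\QQ$ and is applicable to $t^\sharp$, hence by applicability monotonicity it is applicable to $s^\sharp$; since in $\PT$ a path descending to $\alpha\cdot 1$ is ordered to the left of one terminating at $\alpha$, the leftmost applicable pair of $s$ must also lie in $\QQ$, giving $\PTpath_{i+1}(s)=\alpha\cdot 1=\PTpath_{i+1}(t)$. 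The norm hypothesis is consistent with this and supplies the bookkeeping used elsewhere: for $\PhiRP$ the silent step $s^\sharp\rsrew{\QQ\cup\RS}t^\sharp$ gives $\dheight(s^\sharp,\rsrew{(\PP\setminus\QQ)/(\QQ\cup\RS)})\geqslant\dheight(t^\sharp,\rsrew{(\PP\setminus\QQ)/(\QQ\cup\RS)})$, matching $\norm_i(s)=\norm_i(t)$; for $\PhiSC$ it states $\pi(s^\sharp)=\pi(t^\sharp)$, i.e.\ the contracted redex lies outside the projected argument.

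I expect the main obstacle to be the dependency graph case: precisely, proving that the leftmost applicable dependency pair always resides in the maximal-rank SCC, so that the child selected at $\alpha$ is governed \emph{exactly} by $\rk(\GG,\cdot)=\norm_i(\cdot)$ rather than merely bounded by it. A secondary point requiring care is fixing the left-to-right convention of $\PT$ at single-child nodes so that ``descend into $\QQ$'' is ordered to the left of ``stop at $\alpha$''; once these are pinned down, the three cases combine to give the statement.
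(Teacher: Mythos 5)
Your proposal is correct and takes essentially the same route as the paper's proof: reduce to the case where the common position $\alpha$ is an inner node, use $s^\sharp\rsrew{\RS}t^\sharp$ and monotonicity of applicability of dependency pairs for $\PhiRP$/$\PhiSC$ (where the norm hypothesis is indeed not needed), and use $\norm_i(s)=\norm_i(t)$, hence $\rk(\GG,s)=\rk(\GG,t)$, together with the rank-based ordering of children for $\PhiDG$. The only difference is presentational: you make explicit the leftmost-path conventions (that a path descending to $\alpha\cdot 1$ extends one ending at $\alpha$, and that the leftmost child containing an applicable pair is the one of maximal rank, which by definition equals $\rk(\GG,\cdot)$) that the paper's proof leaves implicit.
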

% \begin{proof}[Proof Sketch]
% Straightforward case distinction on the node at position $\PT_i(s)$.
% We need the the first alternative of the lemma if it is an inner node
% $(\PP,\RS)$ with a single child $(\QQ,\RS)$ such that $\PhiRP$ or
% $\PhiSC$ labels the only edge starting from $(\PP,\RS)$ and $t^\sharp$ is
% a normal form of $\QQ/\RS$, but not of $\PP/\RS$. In all other cases,
% the second alternative holds.
% \end{proof}
\begin{proof}
%GM: adapt proof
%AS: proof adapted
Let $\alpha=\PTpath_i(s)=\PTpath_i(t)$.
\begin{enumerate}
\item If $\alpha=\bot$, then $\PTpath_{i+1}(s)=\PTpath_{i+1}(t)=\bot$, as well, so
the lemma holds in this case.
\item If $\alpha\neq\bot$ and $(\PP,\RS)$ denotes the node at position $\alpha$ such that
$(\PP,\RS)$ is a leaf, then either $\PP=\varnothing$, or $\PP$ is a trivial SCC of a
dependency graph. Again, $\PTpath_{i+1}(s)=\PTpath_{i+1}(t)=\bot$.
\item If $\alpha\neq\bot$ and $(\PP,\RS)$ denotes the node at position $\alpha$ such that
$(\PP,\RS)$ is an inner node, then suppose $\Phi$ labels each edge starting from $(\PP,\RS)$:
\begin{itemize}
\item If $\Phi$ is $\PhiRP$ or $\PhiSC$, let $\{(\QQ,\RS)\}=\Phi((\PP,\RS))$.
As $\alpha=\PTpath_i(s)=\PTpath_i(t)$, neither $s^\sharp$ nor $t^\sharp$ is a normal
form of $\PP/\RS$.
Since $s^\sharp\rsrew{\RS}t^\sharp$, $s^\sharp$ can only be a normal form of
$\QQ/\RS$ if $t^\sharp$ is one, as well. If $t^\sharp$ is indeed a normal form
of $\QQ/\RS$, then $\PTpath_{i+1}(t)=\bot$.
Otherwise, $\PTpath_{i+1}(s)=\PTpath_{i+1}(t)=\alpha1$.
\item Now assume that $\Phi$ is $\PhiDG$ using a dependency graph $\GG$.
Since $\norm_{i}(s)=\norm_{i}(t)$, we know that $\rk(\GG,s)=\rk(\GG,t)$.
Therefore, by the definition of $\PTpath$ and the order on the children of $\alpha$,
we know that $\PTpath_{i+1}(s)=\PTpath_{i+1}(t)$. Thus the lemma is shown.
\end{itemize}
\end{enumerate}
\end{proof}

\begin{lemma}
\label{lem:weaklex}
For any terms $s$ and $t$ such that $s\rsnonrootrew{\RS}t$, we have
$\norm(s) \geqnormlex \norm(t)$.
\end{lemma}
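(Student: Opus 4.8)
The plan is to reduce the lexicographic inequality to a statement about a single coordinate. If $\norm(s) = \norm(t)$ there is nothing to prove, so let $i$ be the least index with $\norm_i(s) \neq \norm_i(t)$; it then suffices to show $\norm_i(s) \gnorm \norm_i(t)$, since together with $\norm_j(s) = \norm_j(t)$ for $j < i$ this gives $\norm(s) \gnormlex \norm(t)$ and hence $\norm(s) \geqnormlex \norm(t)$. Throughout I would use that the step $s \rsnonrootrew{\RS} t$ lies strictly below the root, so that $\rt(s) = \rt(t)$ and $s^\sharp \rsrew{\RS} t^\sharp$.

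To reach coordinate $i$, I would first prove by induction on $j$ that for every $j \leqslant i$ either $\PTpath_j(s) = \PTpath_j(t)$ or $\PTpath_j(t) = \bot$. For $j = 1$ this holds because $\PTpath_1$ equals $\epsilon$ on terms that are not normal forms of $\DP(\RS)/\RS$ and $\bot$ otherwise, and $s^\sharp \rssrew{\RS} t^\sharp$ forces $t^\sharp$ into $\NF(\DP(\RS)/\RS)$ whenever $s^\sharp$ lies there. For the inductive step with $j < i$ we have $\norm_j(s) = \norm_j(t)$ by minimality of $i$: if $\PTpath_j(t) = \bot$ then $\PTpath_{j+1}(t) = \bot$ as well, since the projection stays $\bot$ once the current path has ended; and if $\PTpath_j(s) = \PTpath_j(t) \neq \bot$, then Lemma~\ref{lem:equalpositions} yields $\PTpath_{j+1}(t) = \bot$ or $\PTpath_{j+1}(s) = \PTpath_{j+1}(t)$. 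In either case the claim holds at $j + 1$.

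Armed with this, the argument at coordinate $i$ splits on the claim. If $\PTpath_i(t) = \bot$, then $\norm_i(t) \in \{0, \bot\}$ is fixed by whether $\rt(t)$ is defined; since $\norm_i(s) \neq \norm_i(t)$ and $\rt(s) = \rt(t)$, the position $\PTpath_i(s)$ cannot also be $\bot$, so $\norm_i(s)$ is computed at a leaf or inner node and lies in $\N \cup \TERMS$, whence $\norm_i(s) \gnorm \norm_i(t)$ by the first or third clause of $\gnorm$. If instead $\PTpath_i(s) = \PTpath_i(t) = \alpha \neq \bot$, both norms stem from the same node and I would argue per processor. At a leaf or reduction pair node the value is a derivation height with respect to $\rsrew{\PP/\RS}$, respectively $\rsrew{(\PP \setminus \QQ)/(\QQ \cup \RS)}$; since $\RS$ is contained in the modulus, prepending the step $s^\sharp \rsrew{\RS} t^\sharp$ to any derivation issuing from $t^\sharp$ produces one from $s^\sharp$ of equal length, so $\norm_i(s) \geqslant \norm_i(t)$, hence $\norm_i(s) > \norm_i(t)$ as the two differ. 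At a dependency graph node the value is $\rk(\GG, \cdot)$; as $s^\sharp \rssrew{\RS} t^\sharp$, every dependency pair left-hand side reachable from $t^\sharp$ is reachable from $s^\sharp$, so $\rk(\GG, s) \geqslant \rk(\GG, t)$, hence $\rk(\GG,s) > \rk(\GG,t)$ as they differ (both defined and positive because the terms are not normal forms of the relevant SCC). At a subterm criterion node with simple projection $\pi$ the value is the selected argument $\pi(s^\sharp)$, respectively $\pi(t^\sharp)$; since $\rt(s) = \rt(t)$ the same argument is selected, and as the norms differ the rewrite must occur in that argument, giving $\pi(s^\sharp) \rsrew{\RS} \pi(t^\sharp)$ and thus $\norm_i(s) \gnorm \norm_i(t)$ by the second clause of $\gnorm$.

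The main obstacle I anticipate is the bookkeeping in the second paragraph: one must chain Lemma~\ref{lem:equalpositions} all the way to index $i$, and the delicate point is to notice that a coordinate with $\PTpath_j(t) = \bot$ propagates to $\bot$ and therefore bypasses the lemma, whereas a coordinate with a common non-bottom position invokes it. The per-processor cases are then only routine monotonicity facts --- derivation height and rank do not decrease when an $\RS$-step is prepended --- the only care needed being the well-definedness of $\rk(\GG, s)$ and $\rk(\GG, t)$ and the repeated use of $\rt(s) = \rt(t)$, both consequences of the step lying below the root.
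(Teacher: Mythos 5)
Your proposal is correct and follows essentially the same route as the paper's proof: it chains Lemma~\ref{lem:equalpositions} along the current path to align $\PTpath_i(s)$ and $\PTpath_i(t)$ (with the same careful handling of the case where $t$'s path has already ended in $\bot$), and then applies the identical per-processor monotonicity facts for $\PhiRP$, $\PhiDG$, $\PhiSC$, and leaves. The only difference is presentational --- you fix the least differing index and prove the bookkeeping claim by induction on $j$, whereas the paper proves the tail statement $(\norm_i(s),\ldots,\norm_d(s))\geqnormlex(\norm_i(t),\ldots,\norm_d(t))$ by induction on $d-i$ --- which amounts to the same argument.
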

\begin{proof}
We can assume that $\rt(t)$ is a defined symbol.
Otherwise, $\norm_{i}(t)=\bot$ for all $1\leqslant i\leqslant d$, and
hence $\norm(t)=(\bot,\ldots,\bot)$, so the lemma would be trivial.
As $\rt(s)=\rt(t)$, $\rt(s)$ is also defined. Hence, $s^\sharp\rsrew{\RS}t^\sharp$.

We now show the following by induction on $d-i$:
if for all $1\leqslant j<i$, $\norm_j(s)=\norm_j(t)$, then
$(\norm_i(s),\ldots,\norm_d(s))\geqnormlex(\norm_i(t),\ldots,\norm_d(t))$.
Clearly, this claim implies the lemma, so the remainder of this proof is devoted to it.
Applying Lemma~\ref{lem:equalpositions} $i-1$ times reveals that
$\PTpath_i(t)$ is either $\bot$ or the same as $\PTpath_i(s)$.
We perform case distinction on $\PTpath_i(t)$:
\begin{enumerate}
\item If $\PTpath_i(t)=\bot$, then $(\norm_i(t),\ldots,\norm_d(t)) =(0,\ldots,0)$,
and the claim is trivial (note that $\norm_j(s)=\bot$ for any
$1\leqslant j\leqslant d$ could only hold if $\rt(s)$ was a constructor symbol).
\item If $\PTpath_i(s)=\PTpath_i(t)=\alpha$, $\alpha\neq\bot$, and $(\PP,\RS)$ 
denotes the node at $\alpha$ in $\PT$ such that $(\PP,\RS)$ is a leaf, then
either $\PP=\varnothing$, or $\PP$ is a trivial SCC of a dependency graph.
Then $\PTpath_{i+1}(s)=\PTpath_{i+1}(t)=\bot$, and therefore
$(\norm_{i+1}(s),\ldots,\norm_d(s))=(\norm_{i+1}(t),\ldots,\norm_d(t))=(0,\ldots,0)$,
so the claim holds.
\item If $\PTpath_i(s)=\PTpath_i(t)=\alpha$, $\alpha\neq\bot$, and $(\PP,\RS)$
denotes the node at $\alpha$ in $\PT$ such that $(\PP,\RS)$ is an inner node,
then suppose $\Phi$ labels the edges starting from $(\PP,\RS)$.

\begin{itemize}
\item If $\Phi$ is $\PhiRP$ with $\Phi((\PP,\RS))=(\QQ,\RS)$, then because of
$s^\sharp\rsrew{\RS}t^\sharp$, the inequality
$\dheight(s^\sharp,\rsrew{(\PP\setminus\QQ)/(\QQ\cup\RS)})
\geqslant\dheight(t^\sharp,\rsrew{(\PP\setminus\QQ)/(\QQ\cup\RS)})$.
Thus $\norm_i(s)\geqnorm \norm_i(t)$ holds.
\item If $\Phi$ is $\PhiDG$ using a dependency graph $\GG$,
then for each SCC $\PP_j$ in $\GG$, $s^\sharp$ can only be a
normal form of $\PP_j/\RS$ if $t^\sharp$ is one, as well. Therefore, we have
$\norm_i(s)\geqnorm \norm_i(t)$ in that case, too.
\item If $\Phi$ is $\PhiSC$ with simple projection $\pi$, then
$\norm_i(s)=\pi(s^\sharp)\rsgrew{\RS}\pi(t^\sharp)=\norm_i(t)$, and hence
$\norm_i(s)\geqnorm \norm_i(t)$.
\end{itemize}

So, regardless of the processor $\Phi$, we have
$\norm_i(s)\geqnorm \norm_i(t)$. If $\norm_i(s)\gnorm \norm_i(t)$, then the
claim trivially follows. On the other hand, if $\norm_i(s)=\norm_i(t)$, then
the claim holds by induction hypothesis.
\end{enumerate}
\end{proof}

The following lemma extends Lemma~\ref{lem:weaklex} to root steps
$s\rsrootrew{\RS}t$. However, in this case, we do not consider only the root
position of $t$, but all positions that were ``created'' by the rewrite step.
So essentially, we show that such a step causes a decrease in $\gnormlex$ from
$s$ to subterms of $t$.
The restriction on positions $p$ below takes care of the Dershowitz condition in the
definition of dependency pairs and the substitution of the applied
rewrite rule.

\begin{lemma}
\label{lem:strictlex}
For any terms $s$ and $t$ such that $s\rsrootrew{\RS}t$, we have
$\norm(s) \gnormlex \norm(\atpos{t}{p})$
for all $p\in\Pos(t)$ such that $\atpos{t}{p}\nprsubterm s$.
\end{lemma}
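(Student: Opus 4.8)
The plan is to first turn the root $\RS$-step into a single step of a dependency pair, and then to compare $\norm(s)$ and $\norm(\atpos{t}{p})$ component by component following the inductive scheme of the proof of Lemma~\ref{lem:weaklex}, the new ingredient being that the dependency pair step forces a \emph{strict} decrease at exactly the node of $\PT$ that eliminates this pair.

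First I would write $s = l\sigma$ and $t = r\sigma$ for the applied rule $l \to r$ and the matching substitution $\sigma$, and fix $p \in \Pos(t)$ with $u := \atpos{t}{p} \nprsubterm s$. Every position of $t$ is either a non-variable position of $r$ or lies at or below a variable position of $r$; in the latter case $u$ is a subterm of some $\sigma(x)$ with $x$ occurring in $r$, hence in $l$, so that $u \prsubterm l\sigma = s$, contradicting the choice of $p$. Thus $p$ is a non-variable position of $r$ and $u = w\sigma$ for $w := \atpos{r}{p}$; since $w \prsubterm l$ would entail $w\sigma \prsubterm l\sigma = s$, we also get $w \nprsubterm l$. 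If $\rt(u) = \rt(w)$ is a constructor, then no dependency pair reduces $u^\sharp$, whence $\PTpath(u) = ()$ and $\norm(u) = (\bot,\ldots,\bot)$; as $\rt(s) = \rt(l)$ is defined we have $\norm_1(s) \neq \bot$, so $\norm_1(s) \gnorm \norm_1(u)$ by clause~3 of $\gnorm$ and we are done. Otherwise $\rt(w)$ is defined, and the Dershowitz condition yields $\delta := l^\sharp \to w^\sharp \in \DP(\RS)$ with $s^\sharp = l^\sharp\sigma \rsrootrew{\{\delta\}} w^\sharp\sigma = u^\sharp$.

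For the remaining case I would prove, by induction on $d-i$ exactly as in Lemma~\ref{lem:weaklex}, that whenever $\norm_j(s) = \norm_j(u)$ for all $j < i$ we have $(\norm_i(s),\ldots,\norm_d(s)) \gnormlex (\norm_i(u),\ldots,\norm_d(u))$; the case $i=1$ is the assertion. As long as the norms coincide, both current paths follow the path of the nodes of $\PT$ whose label contains $\delta$ (a legitimate candidate for $\PTpath(s)$, since $\delta$ reduces $s^\sharp$). At the common node $\alpha = \PTpath_i(s) = \PTpath_i(u)$, with label $(\PP,\RS)$ and $\delta \in \PP$, I read off the decrease from the processor: if it is $\PhiRP$ and removes $\delta$, then $s^\sharp \rsrootrew{\{\delta\}} u^\sharp$ is a step of $\rsrew{(\PP\setminus\QQ)/(\QQ\cup\RS)}$, forcing $\norm_i(s) > \norm_i(u)$; if it is $\PhiSC$ and removes $\delta$, then $\pi(l^\sharp) \prsuperterm \pi(w^\sharp)$ lifts under $\sigma$ to $\norm_i(s) = \pi(s^\sharp) \prsuperterm \pi(u^\sharp) = \norm_i(u)$, again a strict decrease; if it is $\PhiDG$ with graph $\GG$, then $\rk(\GG,s) \geqslant \rk(\GG,\delta) \geqslant \rk(\GG,u)$, because $\delta$ reduces $s^\sharp$ while every pair of $\PP$ reducing $u^\sharp$ is a $\GG$-successor of $\delta$, so either $\norm_i(s) > \norm_i(u)$ or both ranks equal $\rk(\GG,\delta)$ and, rank being injective on SCCs, both paths descend into the SCC of $\delta$; finally, if $\PhiRP$ or $\PhiSC$ keeps $\delta$, the same estimates give only $\norm_i(s) \geqnorm \norm_i(u)$ and leave $\delta$ in the child. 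Whenever the decrease is strict the induction ends at $i$; whenever it is an equality the path stays on $\delta$ and I apply the induction hypothesis. As $\delta$ is eliminated at some finite depth---removed by a reduction pair or subterm processor, or reaching a leaf that is a trivial SCC, where $\dheight(\cdot^\sharp,\rsrew{\PP/\RS})$ drops from a positive value to $0$---a strict decrease is forced no later than that depth.

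The hard part will be the path-alignment used throughout the induction, i.e.~the analogue of Lemma~\ref{lem:equalpositions} for the root dependency pair step $s^\sharp \rsrootrew{\{\delta\}} u^\sharp$ instead of an $\RS$-step below the root: one must show that equal norms up to index $i$ genuinely force $\PTpath_i(s) = \PTpath_i(u)$ with $\delta$ in the associated DP problem, the key fact being that every dependency pair reducing $u^\sharp$ is a successor of $\delta$ in the relevant dependency graph, and one must respect the leftmost-path convention while doing so. A second delicate point is that $\PTpath(u)$ may run out before $\PTpath(s)$: once $\PTpath_{i+1}(u) = \bot$ while $\PTpath_{i+1}(s) \neq \bot$, the strict decrease should be extracted directly from clause~3 of $\gnorm$, and checking that these boundary configurations always land on the favourable side of $\gnorm$, together with the rank bookkeeping for $\PhiDG$, is where the actual effort lies.
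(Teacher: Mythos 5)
Your proposal is, in substance, the paper's own argument: the paper likewise reduces the root step to a dependency pair step $s^\sharp\rsrew{\DP(\RS)}u^\sharp$ via some pair $\delta$ (your explicit verification that $p$ is a non-variable position of $r$, that $w\nprsubterm l$, and that the constructor-root case gives $\norm(u)=(\bot,\ldots,\bot)\lnormlexcaveat$ a strict first-component decrease, is only spelled out slightly more than in the paper), and then extracts weak decreases along the common prefix of the current paths with strictness exactly where $\delta$ is eliminated, using the same per-processor estimates you give ($\PhiRP$: the $\delta$-step is a $(\PP\setminus\QQ)/(\QQ\cup\RS)$-step; $\PhiSC$: $\pi(l^\sharp)\prsuperterm\pi(w^\sharp)$ closed under substitution; $\PhiDG$: the sandwich $\rk(\GG,s)\geqslant\rk(\GG,\delta)\geqslant\rk(\GG,u)$ because every pair reducing $u^\sharp$ is a $\GG$-successor of $\delta$). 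The organizational difference is how the ``hard part'' you defer is discharged: the paper does not prove a forced-alignment lemma along $\delta$'s path, but instead anchors the argument by defining $j$ as the \emph{greatest} index such that the node at $\PTpath_j(s)$ (the actual current path of $s$) contains $\delta$, and then splits into two cases according to whether $\PTpath_j(u)=\PTpath_j(s)$; divergence before $j$ is handled exactly as you sketch (at a $\PhiRP$/$\PhiSC$ node the unique child forces $\PTpath_{i+1}(u)=\bot$ and clause~3 of $\gnorm$ applies; at a $\PhiDG$ node divergence forces $\norm_i(s)\neq\norm_i(u)$, hence strictness). Note that your alignment claim as literally stated (``equal norms up to index $i$ force $\PTpath_i(s)=\PTpath_i(u)$ with $\delta$ in the node'') must be weakened to the disjunction ``aligned along $\delta$'s path, or a strict decrease has already occurred'', precisely because $\PTpath(u)$ can end early, as you observe; also, at a $\PhiRP$ node where $\delta$ survives into the child, $\norm_{i+1}(s)$ need not be positive, so the strict decrease can genuinely occur later than the first divergence point --- your formulation ``no later than the depth at which $\delta$ is eliminated'' is the robust one, and your endgame (elimination by $\PhiRP$/$\PhiSC$, or a trivial-SCC leaf where $\dheight(s^\sharp,\rsrew{\PP/\RS})\geqslant 1 > \dheight(u^\sharp,\rsrew{\PP/\RS})$ resp.\ $>0$) correctly forces it. Modulo writing out that bookkeeping, which is exactly what the paper's choice of $j$ is engineered to avoid, your plan is sound.

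(One typographical note: the token ``$\lnormlexcaveat$'' above should read ``followed by''; concretely, the constructor-root case gives $\norm_1(s)\in\N\cup\TERMS$ while $\norm_1(u)=\bot$, so clause~3 of $\gnorm$ yields the strict decrease at the first component, as you argue.)
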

\begin{proof}
For this proof, we fix $p$, and let $u=\atpos{t}{p}$. 
We can assume that
$\rt(u)$ is a defined symbol. Otherwise, $\norm(u)
=(\bot,\ldots,\bot)$, but $\norm(s)\geqnormlex(0,\ldots,0)$ (note that
$\rt(s)$ is defined), so the lemma would be immediate.
Hence, we have $s^\sharp\rsrew{\DP(\RS)}u^\sharp$ using some dependency pair
$l\rew r$. Let $j$ be the greatest number between $1$ and $d$ such that
$\PTpath_j(s)\neq\bot$, the node at $\PTpath_j(s)$ is $(\QQ,\RS)$, and $\QQ$
contains $l\rew r$. Note that such a number exists: since
$s^\sharp\rsrew{\DP(\RS)}u^\sharp$, we have $\PTpath_1(s)=\epsilon$,
which denotes $(\DP(\RS),\RS)$, and $\DP(\RS)$ contains $l\rew r$.
Let $\alpha=\PTpath_j(s)$. We distinguish whether $\PTpath_j(u)=\alpha$, as well.
This determines whether the strict part of the lexicographic decrease must
happen at index $j$ or at an earlier index.

\begin{itemize}
\item Suppose $\PTpath_j(u)=\alpha$. Then we show that for
all $1\leqslant i\leqslant j$, $\norm_i(s)\geqnorm\norm_i(u)$ holds, and
$\norm_j(s) \gnorm \norm_j(u)$.
From these two properties, the lemma follows. In order to show
them, we fix some $1\leqslant i\leqslant j$. Let
$\beta=\PTpath_i(s)=\PTpath_i(u)$.

\begin{enumerate}
\item If the node at position $\beta$ is a leaf of $\PT$, then $i=j$, and $\QQ$
is a trivial SCC of a dependency graph.
By assumption, $l\rew r\in\QQ$. Therefore, $\dheight(s^\sharp,\rsrew{\QQ/\RS})
>\dheight(u^\sharp,\rsrew{\QQ/\RS})$, and
thus $\norm_i(s)\gnorm\norm_i(u)$.
\item If the node $(\PP,\RS)$ at position $\beta$ is an inner node of $\PT$, let $\Phi$
be the label of each edge starting from $(\PP,\RS)$.
Obviously, $\QQ\subseteq\PP$, and therefore $l\rew r\in\PP$.
In all three cases, the semantics of $\Phi$ imply that
$\norm_i(s) \geqnorm \norm_i(u)$. Moreover, if $i=j$, then $\norm_i(s)\gnorm\norm_i(u)$
follows. In more detail:

\begin{itemize}
\item If $\Phi$ is $\PhiRP$, then let $\{(\PP',\RS)\}=\Phi((\PP,\RS))$. Since
$l\rew r\in\PP$, it follows that
$\dheight(s^\sharp,\rsrew{(\PP\setminus\PP')/(\PP'\cup\RS)})\geqslant
\dheight(u^\sharp,\rsrew{(\PP\setminus\PP')/(\PP'\cup\RS)})$, and
thus $\norm_i(s)\geqnorm\norm_i(u)$. If $i=j$, then by definition of $j$,
$l\rew r$ is contained in $\PP\setminus\PP'$. Therefore,
$\norm_i(s)\gnorm\norm_i(u)$ in that case.
\item If $\Phi$ is $\PhiDG$ using a dependency graph $\GG$, then by definition
of SCCs in a dependency graph, $\rk(\GG,s)\geqslant\rk(\GG,l\rew r)\geqslant\rk(\GG,u)$,
hence $\norm_i(s)\geqnorm\norm_i(u)$. If $i=j$, then by definition of $j$,
$\rk(\GG,s)\neq\rk(\GG,l\rew r)$. Thus, $\norm_i(s)\gnorm\norm_i(u)$ in
that case.
\item If $\Phi$ is $\PhiSC$ with
$\Phi((\PP,\RS))=(\PP',\RS)$ and simple projection $\pi$, then
$\pi(s^\sharp)\superterm\pi(u^\sharp)$, and hence
$\norm_i(s)\geqnorm\norm_i(u)$. If $i=j$, then by definition of $j$,
$l\rew r\in\PP\setminus\PP'$, and hence
$\pi(s^\sharp)\prsuperterm\pi(u^\sharp)$
and $\norm_i(s)\gnorm\norm_i(u)$ in that case.
\end{itemize}
\end{enumerate}

In all cases, it follows that for all $1\leqslant i\leqslant j$,
$\norm_i(s)\geqnorm\norm_i(u)$ holds, and $\norm_j(s)\gnorm\norm_j(u)$.

\item Suppose $\PTpath_j(u)\neq\alpha$.
Then let $i$ be the greatest number between $1$ and $j$ such that
$\PTpath_i(s)=\PTpath_i(u)=\beta$. As $\beta$ is a prefix of
$\alpha$, the node $(\PP,\RS)$ at $\beta$ is an inner node of $\PT$.
Let $\Phi$ be the label of each edge starting from $(\PP,\RS)$. Using the arguments from
above, we see that $\norm_{i'}(s)\geqnorm\norm_{i'}(u)$ for all
$1\leqslant i'\leqslant i$. We now show that $\norm_i(s)\gnorm\norm_i(u)$ or
$\norm_{i+1}(s)\gnorm\norm_{i+1}(u)$ holds.

\begin{enumerate}
\item If $\Phi$ is $\PhiRP$ or $\PhiSC$, then by our assumptions,
$\PTpath_{i+1}(s)=\beta1$. Since $\beta$ has only one child in this case, this implies
$\PTpath_{i+1}(u)=\bot$. Thus, $\norm_{i+1}(s)>0=\norm_{i+1}(u)$.
\item If $\Phi$ is $\PhiDG$, then $\norm_i(s)\neq\norm_i(u)$, since $\PTpath_{i+1}(s)\neq\PTpath_{i+1}(u)$
by assumption. Thus $\norm_i(s)\gnorm\norm_i(u)$.
\end{enumerate}

In both cases, it follows that $\norm(s)\gnormlex\norm(u)$,
which is what we wanted to show.
\end{itemize}
\end{proof}

Up to now, we shown $\norm$ decreases under rewriting.
For rewrite steps whose redex position is at the root, this decrease is even strict.
In order to turn this into an upper bound on derivational complexities, we still
have to do some work: we also have to consider the $\norm$ of all proper subterms
of a considered term, and the range of $\norm$ is not suitable for direct complexity
measures yet. We now solve these problems by lifting the range of $\norm$ to the
term level and simulating derivations of $\RS$ at that level.

For the rest of this section let $A$ be the maximum arity of any function symbol
occurring in $\RS$, and $C \defsym \max\{\depth{r} \mid l \rew r \in \RS\}$.
Depending on $\PT$, $d$, $A$, $C$, and $g$, we now define a TRS
$\RSsim$ which simulates $\RS$ and is compatible with LPO. The simulating TRS $\RSsim$ is
based on a mapping $\tr$ (see Definition~\ref{def:rstorssim}) such that $s\rsrew{\RS}t$
implies $\tr(s)\rstrew{\RSsim}\tr(t)$. 
Given a term $t$, $\tr$ employs the $d+A$-ary function symbol $\mf$. The first $d$ arguments of $\mf$ are used to
represent $\norm(t)$; the last $a$ arguments of $\mf$ are used to represent $\tr(t')$
for each direct subterm $t'$ of $t$.

In the simulation, we often have to recalculate the first $d$ arguments of each $\mf$.
Due to the definition of $\norm$, we know that for each term $t$ and
$1\leqslant i\leqslant d$, either $\norm_i(t)\in\N$ and $\norm_i(t)\leqslant g(\size{t})$,
or $\norm_i(t)\in\TERMS$ and $\norm_i(t)\subterm t$, or $\norm_i(t)=\bot$. We use a unary function symbol
$\mchoice$ such that $\mchoice(\tr(t))$ rewrites to the representations of
$g(\size{t})$, $\tr(t')$ for each subterm $t'$ of $t$, and $\bot$. In particular
we often we use terms of the shape $\mchoice(\mf(\Null,\ldots,\Null,x_1,\ldots,x_A))$
in the definition of $\RSsim$, so we use the abbreviation $N(x_1,\ldots,x_A)$ for this.

The main tool for achieving the simulation of a
root rewrite step $s\rsrootrew{\RS}t$ are rules which build the new $\mf$ symbols for
the positions in $t$ ``created'' by the step. These are at most $A^{C+1}$ many new
positions, and each proper subterm of $s$ may be duplicated at most that many times.
As a very simple example, if $d=3$, $A=1$, and $C=1$, this behaviour is simulated by
rules of the following shape:
\begin{align*}
\mf(u_1,\ms(u_2),u_3,x)&\rew\mf(u_1,u_2,N(x),\mf(u_1,u_2,N(x),x))\\
\mf(u_1,\mf(v_1,v_2,v_3,y),u_3,x)&\rew\mf(u_1,y,N(x),\mf(u_1,y,N(x),x))\\
\mf(u_1,\mf(v_1,v_2,v_3,y),u_3,x)&\rew\mf(u_1,\Null,N(x),\mf(u_1,\Null,N(x),x))\\
\mf(u_1,\Null,u_3,x)&\rew\mf(u_1,\bot,N(x),\mf(u_1,\bot,N(x),x))
\end{align*}
We use similar rules for decreases of $u_1$ or $u_3$ with respect to the
ordering $\gnorm$.
In order to write down these rules concisely for arbitrary $A$ and $C$,
we make use of the following abbreviation $M^k_i$ (for $i\in\{1,\ldots,d\}$ and
$k\in\N$):
\begin{align*}
% &M'(n,x_1,\ldots,x_A)
% =\\
&M^0_i(u_1,\ldots,u_i,x_1,\ldots,x_A)
=\mf(u_1,\ldots,u_i,\overline{N(x_1,\ldots,x_A)},x_1,\ldots,x_A)\\
&M^{k+1}_i(u_1,\ldots,u_i,x_1,\ldots,x_A)\\
&\quad=\mf(u_1,\ldots,u_i,\overline{N(\overline{M^k_i(u_1,\ldots,u_i,x_1,\ldots,x_A)})},\overline{M^k_i(u_1,\ldots,u_i,x_1,\ldots,x_A)})
\end{align*}
Here $u_i$ ($i \in \{1,\dots,i\}$) and $x_j$ ($j \in \{1,\dots,A\}$)
denote variables and $\overline{t}$ is an abbreviation of $t,\ldots,t$,
where the number of repetitions of $t$ follows from the context.

Consider the reduction pair function $g$ of $\RS$. Since $g$ is assumed to be a
multiple recursive function, it is an easy exercise to define a TRS $\RSsim'$
(employing the constructors $\ms$, $\Null$) that computes the function $g$: one
can simply define $g$ using only initial functions, composition, primitive recursion,
and $k$-ary Ackermann functions, and directly turn the resulting definition of $g$
into rewrite rules. That
is, there exists a TRS $\RSsim'$ and a defined function symbol $\mg$ such that
$\mg(\ms^n(\Null))\rssrew{\RSsim'}\ms^{g(n)}(\Null)$. Here we use $\ms^n(\Null)$
to denote $\ms(\ldots(\ms(\Null))\ldots)$, where $\ms$ is repeated $n$ times.
Moreover, $\RSsim'$ is compatible with a LPO such that the precedence $\succ$ of the
LPO includes $\mg\succ\ms\succ\Null$.

\begin{definition}
\label{d:RSsim}
Consider the following (schematic) TRS $\RSsim$, where $1\leqslant i\leqslant d$
and $1\leqslant j\leqslant A$. Here we use $\vec x$ as a shorthand for $x_1,\ldots,x_A$.
\begin{align*}
1_i\colon&\;&\mf(u_1,\ldots,u_{i-1},\ms(u_i),u_{i+1},\ldots,u_d,\vec x)
&\rew M^C_i(u_1,\ldots,u_i,\vec x)\\
2_{i,j}\colon&\;&\mf(u_1,\ldots,u_{i-1},\mf(v_1,\ldots,v_d,\vec y),u_{i+1},\ldots,u_d,\vec x)
&\rew M^C_i(u_1,\ldots,u_{i-1},y_j,\vec x)\\
3_{i}\colon&\;&\mf(u_1,\ldots,u_{i-1},\mf(v_1,\ldots,v_d,\vec y),u_{i+1},\ldots,u_d,\vec x)&\rew M^C_{i}(u_1,\ldots,u_{i-1},\Null,\vec x)\\
4_{i}\colon&\;&\mf(u_1,\ldots,u_{i-1},\Null,u_{i+1},\ldots,u_d,\vec x)&\rew M^C_{i}(u_1,\ldots,u_{i-1},\bot,\vec x)\\
5_{j}\colon&\;&\msize(\mf(u_1,\ldots,u_d,\vec x))&\rew\mytimes{A}(\msize(x_j))\\
6\colon&\;&\msize(\mc)&\rew\ms(\Null)\\
7\colon&\;&\mytimes{A}(\ms(x))&\rew\ms^A(\mytimes{A}(x))\\
8\colon&\;&\mytimes{A}(\Null)&\rew\Null\\
9\colon&\;&\mf(u_1,\ldots,u_d,\vec x)&\rew\mc\\
10_{j}\colon&\;&\mf(u_1,\ldots,u_d,\vec x)&\rew x_j\\
11\colon&\;&\mh(x)&\rew \mf(\overline{N(\overline{x})},\overline{x})\\
12\colon&\;&\mz&\rew \mf(\overline{N(\overline{\mc})},\overline{\mc})\\
13_{j}\colon&\;&\mchoice(\mf(u_1,\ldots,u_d,\vec x))&\rew x_j\\
14\colon&\;&\mchoice(x)&\rew\mg(\msize(x))\\
15\colon&\;&\mchoice(x)&\rew\bot
\end{align*}
These rules are augmented by $\RSsim'$ defining the function symbol $\mg$. The
signatures of $\RSsim'$ and $\RSsim\setminus\RSsim'$ are disjoint with the
exception of $\mg$ and the constructors $\ms$ and $\Null$.
\end{definition}

Note that $\RSsim$ depends only on the constants $d$, $A$, $C$, and the reduction
pair function $\mg$. The rules $1_i$--$4_i$ are the main rules for the simulation
of the effects of a single rewrite step $s\rsrootrew{\RS}t$ in $\RSsim$. These rules
employ that $\norm_i(s^\sharp)\gnorm\norm_i((\atpos{t}{p})^\sharp)$
for all $p\in\Pos(t)$ such that $\atpos{t}{p}\nprsubterm s$, and
$\norm_{i'}(s^\sharp)\geqnorm\norm_{i'}((\atpos{t}{p})^\sharp)$ for all
$1\leqslant i'\leqslant i$. They are also responsible for creating the at most
$A^{C+1}$ many new positions and copies of each subterm of $s$ in $t$.
The rules $5_{j}$--$8$ define a function symbol $\msize$, that is, $\msize(s)$
reduces to a numeral $\ms^n(\Null)$ such that $n \geqslant \size{s}$.
% see Lemma~\ref{lem:rspgsimulationhelp}(iii) below.
%
The rules $9$--$10_j$ make sure that any superfluous positions and copies of
subterms created by the rules of type $1_i$--$4_i$ can be deleted.
The rules $11$ and $12$ guarantee that the simulating derivation can be
started with a small enough initial term.
The rules $13_{j}$--$15$ define the function symbol $\mchoice$
introduced in the abbreviations $M_i^C$, and $N$. Loosely
speaking, $\mchoice(t)$ is an upper bound of all $\norm_i(t)$ with respect
to $\geqnorm$.
A term of the shape $\mchoice(t)$
reduces either to an immediate subterm of $t$, or to $\mg(\msize(t))$. This
construction is necessary because the range of the functions $\norm_{i}$
is $\N\cup\TERMS\cup\{\bot\}$.

Finally, we need rules $14_{j}$--$15_{j}$ to accomodate the role of $0$ and
$\bot$ with respect to the ordering $\gnorm$.

The next lemma essentially follows from Weiermann's result
that LPO induces multiple recursive derivational complexity.
\begin{lemma}
\label{lem:rssimmultrec}
The function $\Dc{\RSsim}$ is multiply recursive.
\end{lemma}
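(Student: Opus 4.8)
The plan is to apply Weiermann's theorem from~\cite{W95}, which states that any TRS compatible with a lexicographic path order (LPO) has multiply recursive derivational complexity. Thus the entire task reduces to establishing that $\RSsim$ is compatible with some LPO, i.e.\ exhibiting a precedence $\succ$ on the signature of $\RSsim$ and verifying that $l \succ_{\mathsf{lpo}} r$ for every rule $l \rew r \in \RSsim$. Since $\RSsim'$ is already given to be LPO-compatible with a precedence satisfying $\mg \succ \ms \succ \Null$, I would extend that precedence to the remaining symbols of $\RSsim$ and check the finitely many rule schemata $1_i$--$15$.

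First I would fix a candidate precedence. The symbols recursing most deeply should sit highest: I would put $\mz$ and $\mh$ at the top, then $\mf$, then $\mchoice$ (which must dominate $\mg$ and $\msize$ since rule $14$ rewrites $\mchoice(x)$ to $\mg(\msize(x))$, and $N$ unfolds to $\mchoice$ applied beneath $\mf$), then $\mg$ above $\msize$ above $\mytimes{A}$ above $\ms \succ \Null$, with $\mc$ and $\bot$ minimal. Concretely something like $\mh,\mz \succ \mf \succ \mchoice \succ \mg \succ \msize \succ \mytimes{A} \succ \ms \succ \Null \succ \mc \succ \bot$. I would then verify each schema. Rules $5$--$8$ and $9$--$10_j$ are routine: in $5_j$ the root $\msize$ outranks $\mytimes{A}$ on the right and the argument $\msize(x_j)$ is LPO-smaller since $x_j$ is a proper subterm; $9$, $10_j$, $13_j$ are projections or reductions to a strictly smaller constant/subterm. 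Rules $11$, $12$, $14$, $15$ decrease because $\mh,\mz$ and $\mchoice$ sit strictly above everything introduced on their right-hand sides.

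The genuinely delicate cases are the main simulation rules $1_i$--$4_i$, because their right-hand sides $M^C_i(\ldots)$ are built by the nested abbreviations $M^k_i$ and $N$, and hence contain many copies of $\mf$ together with the $\mchoice$-wrapped subterms $N(\ldots)$. The hard part will be checking $\mf(\ldots) \succ_{\mathsf{lpo}} M^C_i(\ldots)$, since $\mf$ appears on both sides: I cannot win simply by a root-symbol drop, so I must use the lexicographic status of $\mf$ on its first $d$ arguments. The key observation is that in each of $1_i$--$4_i$ the $i$-th argument strictly decreases in the term structure --- $\ms(u_i)$ becomes $u_i$ in $1_i$, an $\mf$-headed term becomes its strictly smaller component $y_j$ or $\Null$ in $2_{i,j}$ and $3_i$, and $\Null$ becomes $\bot$ in $4_i$ --- while the arguments in positions $1,\ldots,i-1$ are preserved. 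I would therefore give $\mf$ lexicographic status (left-to-right on the first $d$ arguments) and argue by induction on $k$ that $\mf(u_1,\ldots,u_i,\ldots) \succ_{\mathsf{lpo}} M^k_i(u_1,\ldots,u_i,\vec x)$ and likewise that each $N(\cdots)$ subterm, which unfolds to $\mchoice$ applied to an $\mf$-term, is LPO-dominated by the original left-hand side because $\mf$ outranks $\mchoice$ and the arguments are (sub)terms of the left-hand side. Once this induction is in place, the lexicographic comparison of $\mf$ on its first $d$ arguments delivers the strict decrease at the root, the remaining $\mf$-subterms on the right are handled recursively, and compatibility of $\RSsim$ with the LPO follows; Weiermann's theorem then yields that $\Dc{\RSsim}$ is multiply recursive.
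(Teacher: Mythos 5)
Your proposal is essentially the paper's own proof: the paper likewise extends the LPO precedence of $\RSsim'$ by $\mh,\mz\succ\mf\succ\mchoice\succ\mg,\msize\succ\ms\succ\mytimes,\Null,\mc,\bot$, asserts compatibility of the whole of $\RSsim$ as ``easy to check'', and concludes via Weiermann's result~\cite{W95}, adding only the bookkeeping remark (via~\cite{Peter1950}) that its definition of multiple recursion coincides with Weiermann's. Your deviations are inessential and, if anything, in your favour: you actually sketch the rule-by-rule verification (including the delicate schemata $1_i$--$4_i$ with the nested $M^C_i$ and $N$ abbreviations) that the paper leaves implicit, and your placement of $\mytimes{A}$ strictly above $\ms$ is what orienting rule $7$, $\mytimes{A}(\ms(x))\rew\ms^A(\mytimes{A}(x))$, in fact requires, whereas the paper's printed precedence lists $\ms$ above $\mytimes$.
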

\begin{proof}
By our global assumptions, the TRS $\RSsim'$ computing $g$ can be shown terminating using an LPO
such that the precedence $\succ$ of the LPO contains $\mg\succ\ms\succ\Null$.
It is easy to check that extending this precedence by
$\mh,\mz\succ\mf\succ\mchoice\succ\mg,\msize\succ\ms\succ\mytimes,\Null,\mc,\bot$
makes the whole TRS $\RSsim$ compatible with this LPO. By \cite{W95},
termination of a finite TRS by an LPO implies that the derivational complexity
of that TRS is multiple recursive. Note that the definition of multiple
recursion used in this paper and the definition given in \cite{W95} coincide
by \cite{Peter1950}. Thus, $\Dc{\RSsim}$ is a multiple recursive function.
\end{proof}

For the remainder of this section, let $\FSsim$ denote the signature of $\RSsim$.
We now show that the TRS $\RSsim$ indeed simulates
$\RS$ as requested. 

\begin{definition}
\label{def:rstorssim}
The mapping $\tr\colon\GTERM\to\GTERMsim$ is defined by the equation $\tr(t)=
\mf((\norm_1(t))^\ast,\ldots,(\norm_d(t))^\ast,\tr(t_1),\ldots,\tr(t_n),\mc,\ldots,\mc)$,
where $t=f(t_1,\ldots,t_n)$ and the operator $(\cdot)^\ast$ is defined 
for a term $s$ as follows:
\begin{equation*}
  u^\ast \defsym
  \begin{cases}
    \bot & \text{if $u=\bot$}\\
    \ms^{u}(\Null) & \text{if $u\in\N$} \\
    \tr(u) & \text{if $u\in\TERMS$}
  \end{cases}
\end{equation*}
\end{definition}

We define an equivalence $s \approx t$ on $\GTERMsim$. If $s=\mc$,
then $t=\mc$. Otherwise if $s=\mf(u_1,\ldots,u_d,s_1,\ldots,s_A)$, then
$t=\mf(v_1,\ldots,v_d,t_1,\ldots,t_A)$ such that $s_i \approx t_i$ for all
$1 \leqslant i \leqslant A$.

\begin{lemma}
\label{lem:rssimsize}
For all ground terms $s$ with $t\approx\tr(s)$,
$\msize(t)\rstrew{\RSsim}\ms^n(\Null)$ where $n\geqslant\size{s}$.
\end{lemma}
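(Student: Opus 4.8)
The plan is to reduce the statement to the special case $t=\tr(s)$ and then prove a clean strengthened form by induction on $s$. First I would observe that the only rules applicable to a subterm headed by $\msize$ are $5_j$, $6$, $7$, and $8$, and that rule $5_j$ inspects only the last $A$ arguments of an $\mf$-symbol (recursing into $x_j$) while the base rule $6$ fires on $\mc$. This is precisely the last-$A$-arguments-and-$\mc$ skeleton that $\approx$ preserves: if $t\approx\tr(s)$ then $t$ and $\tr(s)$ carry the same such skeleton, so $\msize(t)$ and $\msize(\tr(s))$ admit exactly the same reductions into numerals. Hence it suffices to fix $s$ and show that $\msize(u)\rstrew{\RSsim}\ms^{A^{\depth{s}+1}}(\Null)$ for \emph{every} $u\approx\tr(s)$; since $A^{\depth{s}+1}\geqslant\size{s}$ this yields the lemma. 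I would carry the quantifier ``for every $u\approx\tr(s)$'' through the induction, so the choice of representative is handled automatically.

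The induction is on the structure of $s$. In the base case $s$ is a constant, so $\tr(s)=\mf(\ldots,\mc,\ldots,\mc)$ with all $A$ final arguments equal to $\mc$; by the definition of $\approx$, any $u\approx\tr(s)$ then also carries $\mc$ in each of its final $A$ positions. Applying rule $5_j$ (for any $j$) gives $\mytimes{A}(\msize(\mc))$, rule $6$ gives $\mytimes{A}(\ms(\Null))$, and rules $7$ and $8$ reduce this to $\ms^A(\Null)$; as $\size{s}=1\leqslant A=A^{\depth{s}+1}$ the claim holds. For the inductive step let $s=f(s_1,\ldots,s_n)$ and pick an index $i_0\leqslant n$ with $\depth{s_{i_0}}=\depth{s}-1$. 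Any $u\approx\tr(s)$ has its $i_0$-th final argument $u_{i_0}\approx\tr(s_{i_0})$, so rule $5_{i_0}$ gives $\msize(u)\rstrew{\RSsim}\mytimes{A}(\msize(u_{i_0}))$. The key arithmetic fact is that $\mytimes{A}$ multiplies a numeral by $A$: iterating rule $7$ and finishing with rule $8$ yields $\mytimes{A}(\ms^m(\Null))\rstrew{\RSsim}\ms^{A\cdot m}(\Null)$. Combining this with the induction hypothesis $\msize(u_{i_0})\rstrew{\RSsim}\ms^{A^{\depth{s_{i_0}}+1}}(\Null)$ gives $\msize(u)\rstrew{\RSsim}\ms^{A\cdot A^{\depth{s_{i_0}}+1}}(\Null)=\ms^{A^{\depth{s}+1}}(\Null)$, as required.

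Finally I would close with the combinatorial inequality $\size{s}\leqslant A^{\depth{s}+1}$, which holds because every symbol of $s$ has arity at most $A$, so $s$ has at most $1+A+\cdots+A^{\depth{s}}\leqslant A^{\depth{s}+1}$ positions (using $A\geqslant2$; the small-arity cases are immediate). The step I expect to be the main, if modest, obstacle is identifying the right invariant for the induction: rule $5_j$ forces $\msize$ to descend a \emph{single} path and discard all sibling subterms, so one cannot literally recover $\size{s}$ by summation. Instead one must exploit that $\mytimes{A}$ overcounts by the full branching factor $A$ at every level, so that descending any single deepest path already produces the overestimate $A^{\depth{s}+1}\geqslant\size{s}$. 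Once this invariant is fixed, both the reduction to the representative $\tr(s)$ via $\approx$ and the inductive computation are routine.
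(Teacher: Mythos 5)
Your proof is correct, and it is in fact \emph{more} careful than the paper's own argument, which takes essentially the same route with two differences. The paper inducts on $\size{s}$ rather than on the structure of $s$: it descends via rule $5_j$ into a child of \emph{maximal size}, obtains $n_j\geqslant\size{s_j}$ from the induction hypothesis, and concludes via the inequality $A\cdot n_j\geqslant\size{s}$; its base case, however, applies rules $9$ and $6$, producing only $\ms(\Null)$ for a constant. Strictly read, the paper's concluding inequality fails for terms with full branching: for $s=f(c,c)$ with $A=2$ one gets $A\cdot n_j=2<3=\size{s}$. Your base case --- applying $5_j$ so as to descend into the $\mc$ padding and collecting an extra factor $A$ via rules $6$, $7$, $8$ --- is precisely what repairs this off-by-one, and your strengthened invariant $\msize(u)\rstrew{\RSsim}\ms^{A^{\depth{s}+1}}(\Null)$ for every $u\approx\tr(s)$ makes the arithmetic go through uniformly, since $\size{s}\leqslant 1+A+\cdots+A^{\depth{s}}\leqslant A^{\depth{s}+1}$ for $A\geqslant2$; carrying the quantifier over $\approx$-equivalent terms through the induction corresponds to the paper's ``without loss of generality'' normalisation of $t$. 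The one loose end is your parenthetical that the small-arity cases are ``immediate'': for $A=1$ your claimed reduct is $\ms(\Null)$, and indeed the statement as given is then false (over a unary signature, every $\mytimes{A}$-step preserves the numeral, so $\msize(\tr(s))$ can never reach a numeral of size $\size{s}$ when $\size{s}\geqslant2$). The correct fix is not a separate case analysis but the observation that one may assume $A\geqslant2$ without loss of generality by constructing $\RSsim$ with $\max(A,2)$ in place of $A$; the additional $\mc$ padding is harmless to the simulation. With that proviso made explicit, your argument is complete, and it documents the leaf-level factor of $A$ that the paper's proof silently needs.
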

% \begin{proof}[Proof Sketch]
% The proof uses induction on $\size{s}$. The main ingredient of the
% inductive argument is straightforward application of the rules
% $5_j$--$8$ of $\RSsim$.
% \end{proof}
\begin{proof}
We show the lemma by induction on $\size{s}$. As $s$ is a ground term,
$s=f(s_1,\ldots,s_n)$. Without loss of
generality we set $t=\mf(\bot,\ldots,\bot,t_1,\ldots,t_n,\mc,\ldots,\mc)$,
where $s_j \approx \tr(t_j)$ for all $1 \leqslant j \leqslant n$.
If $\size{s}=1$, then $n=0$. Thus $\msize(t)\rsrew{\RSsim}\msize(\mc)\rsrew{\RSsim}\ms(\Null)$
by applying rules $9$ and $6$. Otherwise, suppose $\size{s}>1$.
Then let $j$ be such that $\size{s_j}$ is maximal. By induction
hypothesis, we have $\msize(t_j)\rstrew{\RSsim} \ms^{n_j}(\Null)$ with $n_j\geqslant\size{s_j}$.
Hence, by applying rules $5_{j}$, $7$, and $8$, we obtain
$\msize(t)\rsrew{\RSsim} \mytimes{A}(\msize(t_j))\rssrew{\RSsim}
 \mytimes{A}(\ms^{n_j}(\Null)) \rssrew{\RSsim}
 \ms^{A\cdot n_j}(\Null)$.
Due to $A\cdot n_j\geqslant\size{s}$, the lemma follows.
\end{proof}

\begin{lemma}
\label{lem:rssimhelp}
The following properties of $\RSsim$ hold:
\begin{enumerate}
\item \label{en:rssimhelp:1}
If $s=\mf(u_1^\ast,\ldots,u_d^\ast,\vec s)$,
$t=\tr(t')=\mf(v_1^\ast,\ldots,v_d^\ast,\vec s)$, and
$(u_1,\ldots,u_d)\gnormlex(v_1,\ldots,v_d)$, then $s\rstrew{\RSsim}t$.
\item \label{en:rssimhelp:2}
For any ground terms $s=\tr(s')$ and $t=\tr(t')$, $s'\rsrootrew{\RS}t'$
implies $s\rstrew{\RSsim}t$.
\item \label{en:rssimhelp:3}
If $a\rsrew{\RS}b$ and $\tr(a)\rstrew{\RSsim}\tr(b)$, then for any $n$-ary
function symbol $f\in\FS$, we have
$s=\tr(f(t_1,\ldots,a,\ldots,t_n))\rstrew{\RSsim}\tr(f(t_1,\ldots,b,\ldots,t_n))$.
\end{enumerate}
\end{lemma}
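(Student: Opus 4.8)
I would establish the three parts in the stated order, treating part~\ref{en:rssimhelp:1} as the workhorse on which parts~\ref{en:rssimhelp:3} and~\ref{en:rssimhelp:2} rest. The recurring difficulty is that each $\RSsim$-derivation must terminate at a \emph{prescribed} term $\tr(t')$: triggering a $\gnorm$-decrease with one of the rules $1_i$--$4_i$ is easy, but these rules reset the positions to the right of the changed one to copies of $N(\cdots)$, so after every such step I must \emph{recompute} the first $d$ arguments of each $\mf$-node to their intended values $v_i^\ast$. The symbols $\mchoice$ and $N$, together with $\msize$ (Lemma~\ref{lem:rssimsize}), the monotonicity of $g$, and the $\mg$-subsystem $\RSsim'$, are precisely the tools that make this recomputation possible, while rules $9$ and $10_j$ discard the surplus copies that $M^C_i$ produces.

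For part~\ref{en:rssimhelp:1} the subterm slots $\vec s$ of $s$ and $t$ coincide, so only the norm vector changes. I would pick the least index $i$ with $u_{i'}=v_{i'}$ for $i'<i$ and $u_i\gnorm v_i$ and branch on the clause of $\gnorm$ witnessing this: a descent in $\N$ by iterating $1_i$; a term-to-term descent $u_i(\rsrew{\RS}\cup\prsuperterm)^+v_i$ by realising each proper-subterm step with $2_{i,j}$ and each $\RS$-step by the simulation on the (smaller) component term; a descent to $0$ by $3_i$; and a descent to $\bot$ by $4_i$, preceded by $3_i$ if the source is a term or by iterations of $1_i$ if it is a positive numeral. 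Each such rule changes the $i$-th argument but, through $M^C_i$, resets the positions $i+1,\dots,d$ to copies of $N(\cdots)$ and duplicates the subterm material; I would collapse this redundancy with $9$ and $10_j$ and then fix each position $i'>i$ to $v_{i'}^\ast$ by reducing the $N(\cdots)$ there. Here rule $13_j$ delivers the required value when $v_{i'}\in\TERMS$ --- I would exploit that term-valued norm components stem only from $\PhiSC$ and are therefore \emph{direct} subterms, hence literally one of the slots --- whereas rule $14$ followed by Lemma~\ref{lem:rssimsize}, the monotonicity of $g$, and a final descent via $1_{i'}$ produces the exact numeral $\ms^{v_{i'}}(\Null)$ when $v_{i'}\in\N$, and rule $15$ produces $\bot$. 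As repairing position $i'$ again disturbs the positions to its right, I would organise the repair as an inner induction on $d-i$.

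Part~\ref{en:rssimhelp:3} I would reduce to part~\ref{en:rssimhelp:1}. Since $\rstrew{\RSsim}$ is closed under contexts, the hypothesis $\tr(a)\rstrew{\RSsim}\tr(b)$ lets me rewrite, inside the $\mf$-context, the subterm slot holding $\tr(a)$ in $\tr(f(t_1,\dots,a,\dots,t_n))$ into $\tr(b)$, producing a term whose subterm slots already coincide with those of $\tr(f(t_1,\dots,b,\dots,t_n))$. Its norm vector is $\norm(f(t_1,\dots,a,\dots,t_n))$, which by Lemma~\ref{lem:weaklex} is $\geqnormlex$ the target vector; hence part~\ref{en:rssimhelp:1} closes the gap whenever the decrease is strict, and otherwise the two terms already agree.

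Part~\ref{en:rssimhelp:2} is the genuinely hard case and the place where the depth-bounded tree built by $M^C_i$ is essential, since a root step $s'=l\sigma\rsrootrew{\RS}r\sigma=t'$ can create up to $A^{C+1}$ fresh positions. Here I would invoke Lemma~\ref{lem:strictlex}, which yields $\norm(s')\gnormlex\norm(\atpos{t'}{p})$ for exactly the newly created positions $p$ (those with $\atpos{t'}{p}\nprsubterm s'$), the fringe subterms being the instances $\sigma(x)$ already available as subterm slots of $\tr(s')$. The rules $1_i$--$4_i$ generate precisely this tree of new $\mf$-nodes, and I would reuse the $N$/$\mchoice$/$\msize$/$\mg$ repair mechanism of part~\ref{en:rssimhelp:1} at every node, discarding surplus copies with $9$ and $10_j$. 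I expect the main obstacle to be the bookkeeping in this part: matching the rigid right-hand side $\tr(t')$ node-for-node against the term produced by $M^C_i$ and checking at each node that the intended norm value is $\geqnorm$-reachable from the reset $N(\cdots)$, which is exactly what Lemmata~\ref{lem:weaklex} and~\ref{lem:strictlex} guarantee. A secondary subtlety, which forces these three parts to be proved simultaneously with the overall simulation statement by induction on term size, is that a strict descent in a term-valued norm component ultimately relies on simulating an $\RS$-step on a proper subterm.
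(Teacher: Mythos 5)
Your construction follows the paper's proof almost step for step: part (1) is realised by rules $1_i$--$4_i$ together with the $N$/$\mchoice$/$\msize$/$\mg$ repair of the positions to the right of the changed index, part (3) by context closure plus Lemma~\ref{lem:weaklex} plus part (1), part (2) by Lemma~\ref{lem:strictlex} and the tree of fresh $\mf$-nodes built by $M^C_i$, with all three parts proved by a simultaneous induction --- and your observation that term-valued norm entries stem from $\PhiSC$ and are direct subterm slots (so rule $13_j$ retrieves them) is exactly right. However, your choice of induction measure, \emph{term size}, does not support the recursion, and this is a genuine gap. Rewriting with $\RS$ may increase size, so the recursive calls are not at smaller terms: in part (3) you must invoke part (1) at $f(t_1,\ldots,b,\ldots,t_n)$, where $\size{b}$ may exceed $\size{a}$; in part (2) the repair invokes part (1) (and, for term-valued entries, the simulation itself) at terms $u\sigma$ with $u\subterm r$, whose size may exceed $\size{l\sigma}$; and inside part (1) a strict descent $u_i\,({\rsrew{\RS}}\cup{\prsuperterm})^+\,v_i$ in a term-valued component passes through intermediate terms of unbounded size. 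A size induction is therefore not well-founded for this mutual recursion. (A minor further slip: the fringe subterms $\atpos{t'}{p}\prsubterm s'$ in part (2) need not be direct slots of $\tr(s')$; they are extracted by iterating rules $10_j$.)

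The paper resolves this by proving Lemma~\ref{lem:rssimmultrec} \emph{before} the simulation lemma: termination of $\RSsim$ is established purely syntactically via LPO, independently of the simulation, whence $\dheight(s,{\rsrew{\RSsim}}\cup{\prsuperterm})$ is a legitimate well-founded measure, and the three properties are shown by mutual induction on it. Every recursive call then concerns either a proper subterm of $s$ (for instance the slot $\tr(a)$ when an $\RS$-step inside a norm entry must be simulated) or a term already reached from $s$ by at least one $\RSsim$-step --- in particular the term carrying the over-approximated norm vector produced by the $N$-repair, which part (1) of the induction hypothesis then lowers to the exact vector, so you never need your explicit descent to the precise numeral, although your left-to-right repair with the inner induction on $d-i$ would also work once the outer measure is fixed. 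An alternative patch in the spirit of your proposal would be to induct on $\dheight$ of the underlying source terms with respect to ${\rsrew{\RS}}\cup{\prsuperterm}$, which is well-founded because termination of $\RS$ is already known from Theorem~\ref{t:termination}; but some measure of this kind, rather than term size, is indispensable, and with the measure repaired the remainder of your argument goes through essentially as in the paper.
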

\begin{proof}
We show these propositions by mutual induction on
$\dheight(s,{\rsrew{\RSsim}}\cup{\prsuperterm})$.
Note that by Lemma~\ref{lem:rssimmultrec}, $\RSsim$ terminates, and hence
${\rsrew{\RSsim}}\cup{\prsuperterm}$ is well-founded.
\begin{enumerate}
\item In order to show Property~(\ref{en:rssimhelp:1}), observe that it suffices to
show the following items for all $1\leqslant i\leqslant d$ and $1\leqslant j\leqslant A$:
\begin{itemize}
\item $\mf(w_1,\ldots,\ms(w_i),\ldots,w_d,\vec x)\rstrew{\RSsim}
M^0_i(w_1,\ldots,w_i,\vec x)$
\item $\mf(w_1,\ldots,w_{i-1},\mf(w'_1,\ldots,w'_d,\vec y),w_{i+1},
\ldots,w_d,\vec x)\rstrew{\RSsim}M^0_i(w_1,\ldots,w_{i-1},y_{j},\vec x)$
\item $\mf(w_1,\ldots,w_d,\vec x)\rstrew{\RSsim}
M^0_i(w_1,\ldots,w_{i-1},\bot,\vec x)$
\item $\mf(w_1,\ldots,w_d,\vec x)\rstrew{\RSsim}
M^0_i(w_1,\ldots,w_{i-1},\Null,\vec x)$
\item $\mf(w_1,\ldots,w_{i-1},\tr(a),w_{i+1},\ldots,w_d,\vec x)
\rstrew{\RSsim}M^0_j(w_1,\ldots,w_{i-1},\tr(b),\vec x)$ if
$a\rsrew{\RS}b$
\end{itemize}
The first four items follow directly by applying rules $10_1$ and
$1_i$--$4_i$ of $\RSsim$. The last
item follows by applying items~(\ref{en:rssimhelp:2}) and
(\ref{en:rssimhelp:3}) of the induction hypothesis.

\item We now show Property~(\ref{en:rssimhelp:2}). Let $l\rew r$ be the
rewrite rule, and $\sigma$ the substitution applied in the step
$s'\rsrootrew{\RS}t'$. Let $(v_1,\ldots,v_n)=\norm(s')$. Since $l$ is not
a variable, we have $l=f(l_1,\ldots,l_n)$. By Lemma~\ref{lem:strictlex}, we
have $\norm(s')\gnormlex\norm(\atpos{t'}{p})$ for all $p\in\Pos(t')$
such that $\atpos{t'}{p}\nprsubterm s'$. By the definition of lexicographic
decrease, there exists $1\leqslant i\leqslant d$ such that
$v_i\gnorm\norm_i(\atpos{t'}{p})$ and for all
$1\leqslant j\leqslant i$, $\norm_j(s')=\norm_j(\atpos{t'}{p})$. If $v_i,\norm_i(\atpos{t'}{p})\in\N$,
then $s$ has the shape $\mf(v_1^\ast,\ldots,\ms^{v_i}(\Null),\ldots,v_d^\ast,
\tr(l_1\sigma),\ldots,\tr(l_n\sigma),\overline{c})$, and we can apply rules $1_i$ and
$10_1$ to obtain $M^{\depth{r}}_i(v_1^\ast,\ldots,v_{i-1}^\ast,\ms^{v_i-1}(\Null),
\tr(l_1\sigma),\ldots,\tr(l_n\sigma),\overline{c})$. (We show the rest of the proof only for
this case. On the other hand, if $v_i,\norm_i(\atpos{t'}{p})\in\TERMS$
or $\norm_i(\atpos{t'}{p})\in\{0,\bot\}$, we would proceed similarly
using rule $2_{i,j}$, $3_i$, or $4_i$ instead
of $1_i$). Note that because of $v_i>\norm_{i}(\atpos{t'}{p})$, and
$\norm_{i}(\atpos{t'}{p})\in\N$ we certainly have $v_i>0$. We now show the
following claim by side induction on $\depth{u}$.
\begin{claim}
$M^{\depth{u}}_i(v_1^\ast,\ldots,v_{i-1}^\ast,\ms^{v_i-1}(\Null),\tr(l_1\sigma),\ldots,\tr(l_n\sigma),\overline{\mc})
\rssrew{\RSsim}\tr(u\sigma)$ whenever $u\subterm r$.
\end{claim}
Note that since $r\subterm r$, showing this claim suffices to conclude
Property~(\ref{en:rssimhelp:2}) of the lemma.
In proof of the claim, it suffices to consider the interesting case that
$u\nprsubterm l$. Since variables ocurring in $u$ also occur in
$r$ and hence in $l$, the condition $u\nprsubterm l$ implies that
$u$ is not a variable. Hence, $u=h(u_1,\ldots,u_{m})$. Let
$\norm(u\sigma)=(w_1,\ldots,w_d)$. By side induction hypothesis and
employing rule $10_1$ $\depth{u}-1-\depth{u_j}$ many times, we obtain for all
$1\leqslant j\leqslant m$
\begin{equation*}
M^{\depth{u}-1}_i(v_1^\ast,\ldots,v_{i-1}^\ast,\ms^{v_i-1}(\Null),\tr(l_1\sigma),\ldots,\tr(l_n\sigma),\overline{\mc})
\rssrew{\RSsim}\tr(u_j\sigma) \tpkt
\end{equation*}
By combining this with $A-n'$ many applications of rule $9$, we obtain
\begin{align*}
&M^{\depth{u}}_i(v_1^\ast,\ldots,v_{i-1}^\ast,\ms^{v_i-1}(\Null),\tr(l_1\sigma),\ldots,\tr(l_n\sigma),\overline{\mc}) \\
&\quad\rssrew{\RSsim}\mf(v_1^\ast,\ldots,v_{i-1}^\ast,\ms^{v_i-1}(\Null),
\overline{N(\tr(u_1\sigma),\ldots,\tr(u_m\sigma),\overline{\mc})},\tr(u_1\sigma),\ldots,\tr(u_m\sigma),\overline{\mc}) \tpkt
\end{align*}
Recall that by Lemma~\ref{lem:strictlex}, we have
$(v_1,\ldots,v_d)\gnormlex(w_1,\ldots,w_d)$. Moreover, by the
definitions of $\norm$ and $g$, for all $m<j\leqslant d$, either $w_j\in\N$
and $w_j\leqslant g(\size{u\sigma})$, or $w_j\in\GTERM$ and
$w_j\subterm u\sigma$, or $w_j=\bot$. In either case, it is easy to check
that $N(\tr(u_1\sigma),\ldots,\tr(u_{m}\sigma),\ldots,
\overline{\mc})\rsrew{\RSsim}w'^\ast_j$ such that $w'_j\geqnorm w_j$.
Therefore, we obtain
\begin{align*}
&M^{\depth{u}}_i(v_1^\ast,\ldots,v_{i-1}^\ast,\ms^{v_i-1}(\Null),\tr(l_1\sigma),\ldots,\tr(l_n\sigma),\overline{\mc}) \\
&\quad\rssrew{\RSsim}\mf(v_1^\ast,\ldots,v_{i-1}^\ast,\ms^{v_i-1}(\Null),
w'^\ast_{i+1},\ldots,w'^\ast_d,\tr(u_1\sigma),\ldots,\tr(u_m\sigma),\overline{\mc})
\end{align*}
such that $w'_j\geqnorm w_j$ for all $m<j\leqslant d$. Observe that
$(v_1,\ldots,v_{i-1},v_i-1,w'_{i+1},\ldots,w'_d)\geqnormlex(w_1,\ldots,w_d)$.
Therefore, item~(\ref{en:rssimhelp:1}) of the induction hypothesis yields
\begin{equation*}
\mf(v_1^\ast,\ldots,v_{i-1}^\ast,\ms^{v_i-1}(\Null),w'^\ast_{i+1},\ldots,w'^\ast_d,
\tr(u_1\sigma),\ldots,\tr(u_m\sigma),\overline{\mc})\rssrew{\RSsim}\tr(u\sigma) \tkom
\end{equation*}
and the claim and thus Property~(\ref{en:rssimhelp:2}) follow.

\item We now prove Property~(\ref{en:rssimhelp:3}). Let
$\norm(f(t_1,\ldots,a,\ldots,t_n))=(v_1,\ldots,v_d)$. Then the term
$\tr(f(t_1,\ldots,a,\ldots,t_n))$ has the shape
$\mf(v_1^\ast,\ldots,v_d^\ast,\tr(t_1),\ldots,\tr(a),\ldots,\tr(t_n),
\overline{\mc})$. By assumption,
\begin{align*}
&\mf(v_1^\ast,\ldots,v_d^\ast,\tr(t_1),\ldots,\tr(a),\ldots,\tr(t_n),\overline{\mc}) \\
&\quad\rstrew{\RSsim}\mf(v_1^\ast,\ldots,v_d^\ast,\tr(t_1),\ldots,\tr(b),\ldots,\tr(t_n),
\overline{\mc}) \tpkt
\end{align*}
Let $\norm(f(t_1,\ldots,b,\ldots,t_n))=(w_1,\ldots,w_d)$.
We have $(v_1,\ldots,v_d)\geqnormlex(w_1,\ldots,w_d)$ by Lemma~\ref{lem:weaklex}.
By item~(\ref{en:rssimhelp:1}) of the induction hypothesis,
\begin{align*}
&\mf(v_1^\ast,\ldots,v_d^\ast,\tr(t_1),\ldots,\tr(b),\ldots,\tr(t_n),\overline{\mc}) \\
&\quad\rssrew{\RSsim}\mf(w^\ast_1,\ldots,w^\ast_d,\tr(t_1),\ldots,\tr(b),\ldots,\tr(t_n),\overline{\mc}) \\
&\quad=\tr(f(t_1,\ldots,b,\ldots,t_n)) \tkom
\end{align*}
concluding Property~(\ref{en:rssimhelp:3}) and the lemma.
\end{enumerate}
\end{proof}

The next lemma is an easy consequence of
Lemma~\ref{lem:rssimhelp}(\ref{en:rssimhelp:2}) and (\ref{en:rssimhelp:3}).
\begin{lemma}
\label{lem:rssimulation}
For any ground terms $s$ and $t$, $s\rsrew{\RS}t$
implies $\tr(s)\rstrew{\RSsim}\tr(t)$.
\end{lemma}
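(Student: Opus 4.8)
The plan is to proceed by induction on the position $p \in \Pos(s)$ at which the step $s \rsrew{\RS} t$ contracts its redex, distinguishing the root case from the proper case. Before starting, I would note the single well-definedness point: since $\RS$-rewriting maps ground terms to ground terms and every subterm of a ground term is again ground, the whole argument stays within $\GTERM$, so $\tr$ is defined on every term that occurs. No other preliminary is needed, since all the real work has already been done in Lemma~\ref{lem:rssimhelp}.

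For the base case, suppose $p$ is the root position, so $s \rsrootrew{\RS} t$. Then Lemma~\ref{lem:rssimhelp}(\ref{en:rssimhelp:2}) applies directly and yields $\tr(s) \rstrew{\RSsim} \tr(t)$, which is exactly the claim. This is also the case that produces the \emph{nonempty} derivation required by $\rstrew{\RSsim}$.

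For the inductive step, suppose the redex sits at a proper position, say $p = iq$ with $1 \leqslant i \leqslant n$. Writing $s = f(s_1,\ldots,s_n)$, the step takes place inside the $i$-th argument, so that $s_i \rsrew{\RS} s_i'$ (at the shorter position $q$) and $t = f(s_1,\ldots,s_i',\ldots,s_n)$. The induction hypothesis, applied to the step $s_i \rsrew{\RS} s_i'$, gives $\tr(s_i) \rstrew{\RSsim} \tr(s_i')$. Now I would invoke Lemma~\ref{lem:rssimhelp}(\ref{en:rssimhelp:3}) with $a = s_i$ and $b = s_i'$: its two hypotheses are precisely $a \rsrew{\RS} b$ (the given step) and $\tr(a) \rstrew{\RSsim} \tr(b)$ (the induction hypothesis), and its conclusion is exactly $\tr(f(s_1,\ldots,s_i,\ldots,s_n)) \rstrew{\RSsim} \tr(f(s_1,\ldots,s_i',\ldots,s_n))$, that is, $\tr(s) \rstrew{\RSsim} \tr(t)$.

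I expect no genuine obstacle here, which is consistent with the remark preceding the statement that the lemma is an \emph{easy} consequence of parts~(\ref{en:rssimhelp:2}) and~(\ref{en:rssimhelp:3}). The only point that demands care is matching the two separate hypotheses of part~(\ref{en:rssimhelp:3})—the \emph{syntactic} rewrite step and the \emph{already-established} simulating derivation—to their correct sources, namely the given step and the induction hypothesis respectively. This deliberate splitting of hypotheses in Lemma~\ref{lem:rssimhelp}(\ref{en:rssimhelp:3}) is exactly what makes the context-closure induction go through: part~(\ref{en:rssimhelp:3}) merely propagates an existing simulation through one layer of context rather than re-deriving it, so the induction reduces the claim to strictly smaller positions and terminates at the root case handled by part~(\ref{en:rssimhelp:2}).
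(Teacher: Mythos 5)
Your proof is correct and matches the paper's intended argument: the paper explicitly presents this lemma as an easy consequence of Lemma~\ref{lem:rssimhelp}(\ref{en:rssimhelp:2}) and (\ref{en:rssimhelp:3}), and your induction on the redex position---root case via part~(\ref{en:rssimhelp:2}), proper position via the induction hypothesis fed into part~(\ref{en:rssimhelp:3})---is precisely the natural way to spell that out. Your side remark on groundness of subterms is also the right (and only) well-definedness point to check for $\tr$.
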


Lemma~\ref{lem:rssimulation} yields that the length of any derivation in $\RS$
can be estimated by the maximal derivation height with respect to $\RSsim$. 
To extend Lemma~\ref{lem:rssimulation} so that 
the derivational complexity function $\Dc{\RS}$ can be measured via the
function $\Dc{\RSsim}$ we make use of the following lemma.

\begin{lemma}
\label{lem:rssimulationstart}
For any ground term $t$, we have
$\mh^{\depth{t}}(\mz)\rstrew{\RSsim}\tr(t)$.
\end{lemma}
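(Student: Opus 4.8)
The plan is to establish the following strengthening by induction on $m$: for every ground term $t$ and every $m \geqslant \depth{t}$, one has $\mh^m(\mz) \rstrew{\RSsim} \tr(t)$; the lemma is then the instance $m = \depth{t}$. The strengthening is needed because rule $11$ copies a single term into all argument slots of $\mf$, while the immediate subterms of $t$ usually have differing (smaller) depths; permitting any $m \geqslant \depth{t}$ lets one induction hypothesis serve all of them simultaneously. For the base case $m = 0$ the term $t$ is a constant $c$ and $\mh^0(\mz) = \mz$; rule $12$ rewrites $\mz$ to $\mf(\overline{N(\overline{\mc})}, \overline{\mc})$, whose argument slots $\mc, \ldots, \mc$ already agree with those of $\tr(c)$, so it only remains to rewrite each index slot to $(\norm_i(c))^\ast$ by the per-slot procedure described below.

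For the inductive step write $t = f(t_1, \ldots, t_n)$, so that $\depth{t_j} \leqslant m - 1$ for every $j$. First I would apply rule $11$ to $\mh^m(\mz) = \mh(\mh^{m-1}(\mz))$, yielding $\mf(\overline{N(\overline{w})}, \overline{w})$ with $w = \mh^{m-1}(\mz)$ placed in every argument slot and inside every $N$. Using the induction hypothesis (applicable since $m - 1 \geqslant \depth{t_j}$) together with rule $9$, I rewrite the $j$-th argument slot of the outer $\mf$ to $\tr(t_j)$ for $j \leqslant n$ and to $\mc$ for $j > n$, so that the argument slots now coincide with those of $\tr(t)$. These argument slots will be preserved for the rest of the derivation, and the index slots are fixed one at a time.

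To fix an index slot I first reduce its inner copies of $w$ (again by the induction hypothesis), so that the $i$-th slot becomes $\mchoice(s_0)$ with $s_0 \approx \tr(t)$. If $\norm_i(t) = \bot$, rule $15$ gives $\bot$; if $\norm_i(t) = t_j \in \TERMS$ (the subterm criterion case), rule $13_j$ projects out the argument slot holding $\tr(t_j)$. The genuinely delicate case is $\norm_i(t) \in \N$: here rule $14$ and Lemma~\ref{lem:rssimsize} only deliver $\mg(\msize(s_0)) \rstrew{\RSsim} \ms^{q}(\Null)$ with $q \geqslant \size{t}$, whence $g(q) \geqslant g(\size{t}) \geqslant \norm_i(t)$ by monotonicity of $g$, so this produces merely an over-estimate of the target value $\ms^{\norm_i(t)}(\Null)$. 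I would correct the over-estimate by using rule $1_i$ as a unit decrement: one application, followed by collapsing the resulting $M^C_i$ down to $M^0_i$ via the projection rules $10_j$ (which leave the argument slots $\vec x$ intact), lowers index $i$ by one and resets the later indices $i+1, \ldots, d$ to $N(\vec x)$. Applying this $q - \norm_i(t)$ times brings slot $i$ to exactly $\ms^{\norm_i(t)}(\Null)$.

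Since each such decrement overwrites all later index slots, the indices must be repaired strictly from left to right: rule $1_i$ never alters $u_1, \ldots, u_{i-1}$, so a slot once made correct stays correct, while the slots reset to $N(\vec x)$ — now over the already-correct argument slots $\tr(t_1), \ldots, \tr(t_n), \mc, \ldots, \mc$ — are repaired in subsequent iterations. After all $d$ indices have been processed the outer term is exactly $\tr(t)$, completing the induction. I expect the main obstacle to be precisely the numeric case: only over-estimates of the norms are directly obtainable, and the decrement that fixes this corrupts the later index slots. Both difficulties are handled together by the left-to-right schedule and by the observation that $M^C_i$ collapses to $M^0_i$ using rules $10_j$ without disturbing the argument slots.
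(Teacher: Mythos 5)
Your proof is correct, and its overall skeleton matches the paper's: enter via rules $12$/$11$, use the induction hypothesis to place $\tr(t_j)$ (resp.\ $\mc$) into the argument slots, reduce each $N$-term to a value $v'_i$ with $v'_i\geqnorm\norm_i(t)$, and then correct the over-estimates downwards to reach $\tr(t)$ exactly. There are two local differences. First, where you strengthen the induction hypothesis to all $m\geqslant\depth{t}$, the paper keeps plain induction on $\depth{t}$ and instead lowers the tower inside the system: rules $11$ and $10_1$ give $\mh(x)\rsrew{\RSsim}\mf(\overline{N(\overline{x})},\overline{x})\rsrew{\RSsim}x$, hence $\mh^{\depth{t}-1}(\mz)\rssrew{\RSsim}\mh^{\depth{t_j}}(\mz)$, after which the unstrengthened hypothesis applies; the two devices are interchangeable. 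Second, and more substantively, the paper finishes both the base and the step case by citing Lemma~\ref{lem:rssimhelp}(\ref{en:rssimhelp:1}): since $(v'_1,\ldots,v'_d)\geqnormlex\norm(t)$, that lemma converts $\mf(v'^\ast_1,\ldots,v'^\ast_d,\tr(t_1),\ldots,\tr(t_n),\overline{\mc})$ into $\tr(t)$ in one stroke. You instead re-derive exactly the needed instance by hand: rule $1_i$ as a unit decrement, collapsing the resulting $M^C_i$ to $M^0_i$ by the projections $10_j$, scheduled strictly left to right. This inline argument is sound---your observations that rule $1_i$ leaves $u_1,\ldots,u_{i-1}$ untouched and that the slots reset to $N(\vec x)$ are repaired in later iterations are precisely the mechanics used in the paper's proof of Lemma~\ref{lem:rssimhelp}(\ref{en:rssimhelp:1})---and for this particular application it is genuinely more elementary: because rules $13_j$ and $15$ deliver the term-valued and $\bot$-valued norms exactly (the $\PhiSC$-norms are direct subterms, so no over-estimate arises there), you only ever need numeric decrements, and you thereby avoid the $\gnorm$-decreases on terms and hence the entire mutual induction with properties (\ref{en:rssimhelp:2}) and (\ref{en:rssimhelp:3}) on which Lemma~\ref{lem:rssimhelp} rests. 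What the citation buys the paper is brevity and reuse, since Lemma~\ref{lem:rssimhelp}(\ref{en:rssimhelp:1}) is established beforehand and is needed for the simulation lemmas in any case.
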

% \begin{proof}[Proof Sketch]
% The proof uses induction on $\depth{t}$. The main ingredients of the
% inductive argument are rules $11$--$12$ of $\RSsim$
% and $\mchoice(t')$ essentially being an upper bound for all $\norm_i(t')$.
% \end{proof}
\begin{proof}
We proceed by induction on $\depth{t}$. If $\depth{t}=0$, then $t$ is a
constant. Rule $12$ yields the rewrite step
$\mz\rsrew{\RSsim}\mf(\overline{N(\overline{\mc})},\overline{\mc})$.
Let $\norm(t)=(v_1,\ldots,v_d)$. By the definition of $\norm$ and
$g$, for all $1\leqslant i\leqslant d$, we have either $v_i\in\N$ and
$v_i\leqslant g(1)$, or $v_i\in\GTERM$ and $v_i\subterm t$, or $v_i=\bot$. In
all three cases, it is easy to check that
$N(\overline{\mc})\rsrew{\RSsim}v'^\ast_i$ for some $v'_i$ with
$v'_i\geqnorm v_i$. Hence, we obtain
$\mz\rstrew{\RSsim}\mf(v'^\ast_1,\ldots,v'^\ast_d,\overline{\mc})$
such that $(v'_1,\ldots,v'_d)\geqnormlex(v_1,\ldots,v_d)$. By
Lemma~\ref{lem:rssimhelp}(\ref{en:rssimhelp:1}),
$\mf(v'^\ast_1,\ldots,v'^\ast_d,\overline{\mc})\rssrew{\RSsim}
\mf(v^\ast_1,\ldots,v^\ast_d,\overline{\mc})=\tr(t)$,
hence the lemma follows.

Assume $\depth{t}>0$, so $t$ has the shape $f(t_1,\ldots,t_n)$. Then
rule $11$ yields the rewrite step
\begin{equation*}
\mh^{\depth{t}}(\mz)\rsrew{\RSsim}\mf(\overline{N(\overline{\mh^{\depth{t}-1}(\mz)})},\overline{\mh^{\depth{t}-1}(\mz)})\tpkt
\end{equation*}
Using rules $11$ and $10_1$, we obtain
$\mh^{\depth{t}-1}(\mz)\rssrew{\RSsim}\mh^{\depth{t_j}}(\mz)$
for all $1\leqslant j\leqslant n$, and by induction hypothesis,
$\mh^{\depth{t_j}}(\mz)\rstrew{\RSsim}\tr(t_j)$. Therefore,
\begin{equation*}
\mh^{\depth{t}}(\mz)\rstrew{\RSsim}\mf(\overline{N(\tr(t_1),\ldots,\tr(t_n),\overline{\mc})},
\tr(t_1),\ldots,\tr(t_n),\overline{\mc})\tpkt
\end{equation*}
Let $\norm(t)=(v_1,\ldots,v_d)$. By the definition of $\norm$ and
$g$, for all $1\leqslant i\leqslant d$, we have either $v_i\in\N$ and
$v_i\leqslant g(\size{t})$, or $v_i\in\GTERM$ and $v_i\subterm t$, or
$v_i=\bot$. In all three cases, it is easy to check that
$N(\tr(t_1),\ldots,\tr(t_n),\overline{\mc})
\rsrew{\RSsim}v'^\ast_i$ for some $v'_i$ with $v'_i\geqnorm v_i$. Hence, we
obtain
\begin{equation*}
\mh^{\depth{t}}(\mz)\rstrew{\RSsim}\mf(v'^\ast_1,\ldots,v'^\ast_d,
\tr(t_1),\ldots,\tr(t_n),\overline{\mc})
\end{equation*}
such that $(v'_1,\ldots,v'_d)\geqnormlex(v_1,\ldots,v_d)$. By
Lemma~\ref{lem:rssimhelp}(\ref{en:rssimhelp:1}),
\begin{equation*}
\mf(v'^\ast_1,\ldots,v'^\ast_d,\tr(t_1),\ldots,\tr(t_n),\overline{\mc})
\rssrew{\RSsim}\mf(v^\ast_1,\ldots,v^\ast_d,\tr(t_1),\ldots,\tr(t_n),
\overline{\mc})=\tr(t)\tkom
\end{equation*}
thus the lemma follows.
\end{proof}

\begin{proof}[Proof of Main Theorem]
Let $\RSsim$ be the simulating TRS for $\RS$, as defined over the course of
this section. Due to
Lemma~\ref{lem:rssimmultrec}, $\Dc{\RSsim}$ is multiply recursive.
Let $t$ be a term. Without loss of generality, we assume that $t$ is ground. Due to
Lemmata~\ref{lem:rssimulation} and \ref{lem:rssimulationstart}, we have the
following inequalities:
\begin{equation*}
\dheight(t,\rsrew{\RS})\leqslant\dheight(\tr(t),\rsrew{\RSsim})\leqslant
\dheight(\mh^{\depth{t}}(\mz),\rsrew{\RSsim}) \tpkt
\end{equation*}
Note that $\size{\mh^{\depth{t}}(\mz)}\leqslant\size{t}$. Hence for all
$n\in\N$: $\Dc{\RS}(n)\leqslant\Dc{\RSsim}(n)$. Thus,
$\Dc{\RS}$ is multiply recursive. Tightness of the bound follows
by Lemma~\ref{l:lowerbound}: for any multiply recursive function $f$, there
exists a $k$ such that $\Dc{\RSpeter{k}}$ dominates $f$, and $\Dc{\RSpeter{k}}$
terminates by Theorem~\ref{t:termination}. Moreover, the proof tree induced
by the termination proof shown in Example~\ref{ex:subtermlowerbound} admits
a constant reduction pair function.
\end{proof}

\section{Conclusion and Future Work}
\label{sec:conclusion}

In this paper we established that the derivational complexity of
any TRS whose termination can be shown within the DP framework is bounded
by a multiple recursive function, whenever the set of processors used
is suitably restricted.

As briefly mentioned in the introduction the \emph{derivational complexity} is
not the only measure of the complexity of a TRS suggested in the literature.
In particular, alternative approaches have been suggested by Choppy et al.~\cite{CKS:1989}, 
Cichon and Lescanne~\cite{CL:1992}, Hirokawa and the first author~\cite{HM:2008}. 
In~\cite{HM:2008} the \emph{runtime complexity} with respect to a TRS is
defined as a refinement of the derivational complexity, by
restricting the set of admitted initial terms. This notion has
first been suggested in~\cite{CKS:1989}, where it is augmented by an \emph{average case}
analysis. Finally~\cite{CL:1992} studies the complexity of the functions \emph{computed}
by a given TRS. This latter notion is extensively studied within \emph{implicit computational
complexity theory} (see~\cite{BMR:2009} for an overview).

While we have presented our results for derivational complexity, it is
easy to see that the same result holds for runtime complexity (as defined in~\cite{HM:2008}) 
and also for the complexity of the functions computed by the TRS (as suggested in~\cite{CL:1992}).
For the former it suffices to observe that runtime complexity is a restriction
of derivational complexity and that Example~\ref{ex:subtermlowerbound} 
provides a non-primitive recursive lower bound also in the 
context of runtime complexity. With respect
to the second observation it suffices to observe that any function computed by a TRS that
admits at most multiple recursive runtime complexity is computable on a Turing machine
in multiple recursive time (compare also~\cite{AM:2010}). We assume that a similar
result holds for the more involved notion proposed in~\cite{CKS:1989}. However,
this requires further work.

Thus our results indicate that the DP framework may induce multiple recursive complexity.
This constitutes a first, but important, step towards 
the analysis of the complexity induced by the DP framework \emph{in general}. 
Note that for all termination technique whose 
complexity has been analysed a multiple recursive upper bound exists. 
This leads us to the following conjecture.
\begin{conjecture}
Let $\RS$ be a TRS whose termination can be proved with the DP framework using
any DP processors, whose induced complexity does not exceed the class of
multiple recursive functions. Then the derivational complexity with respect to $\RS$
is multiple recursive.
\end{conjecture}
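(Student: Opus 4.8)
The plan is to lift the proof of Theorem~\ref{t:main} from the three fixed processors to an arbitrary sound processor $\Phi$ by abstracting the precise role that each concrete processor played in the norm $\norm$ and in the simulating TRS $\RSsim$. The starting observation is that the entire argument of Section~\ref{sec:modular} used only three features of $\PhiRP$, $\PhiDG$, and $\PhiSC$: each removes dependency pairs while leaving the second component $\RS$ fixed; each equips its node with a term-indexed quantity (namely $\dheight(t^\sharp,\rsrew{(\PP\setminus\QQ)/(\QQ\cup\RS)})$, $\rk(\GG,t)$, or $\pi(t^\sharp)$) that decreases weakly under $\rsnonrootrew{\RS}$ by Lemma~\ref{lem:weaklex} and strictly when a removed pair fires at the root by Lemma~\ref{lem:strictlex}; and that quantity is bounded by a multiply recursive function of $\size{t}$ (this is exactly the content of the reduction pair function $g$ for $\PhiRP$, and is trivial for the other two). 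Accordingly, I would axiomatise a sound processor $\Phi$ by postulating, for each application at a node $(\PP,\RS)$, a \emph{progress measure} $\nu_\Phi\colon\TERMS\to W_\Phi$ into a well-founded order $(W_\Phi,\succ)$ satisfying these three properties, together with a multiply recursive \emph{processor function} $g_\Phi$ such that every value $\nu_\Phi(t)$ reachable from an initial term is, after a fixed multiply recursive encoding of $W_\Phi$ into $\N\cup\TERMS\cup\{\bot\}$, dominated by $g_\Phi(\size{t})$. The hypothesis that the induced complexity of $\Phi$ does not exceed the multiply recursive functions is precisely what should guarantee the existence of such a $g_\Phi$; the genuine work lies in showing that the measure can always be chosen term-indexed and compatible with the lexicographic combination along the proof tree.

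Granting this abstraction, the combinatorial core transfers almost verbatim. I would redefine $\norm_i(t)$ at a node of the current path $\PTpath(t)$ to be the encoded value of $\nu_\Phi(t)$ for the processor $\Phi$ labelling that node, keep $\norm(t)=(\norm_1(t),\ldots,\norm_d(t))$ as before, and re-prove the analogues of Lemma~\ref{lem:equalpositions}, Lemma~\ref{lem:weaklex}, and Lemma~\ref{lem:strictlex} using only the three axioms above in place of the case analysis over the concrete processors. As before, this yields that non-root steps decrease $\norm$ weakly in $\geqnormlex$ and that root steps on ``created'' positions decrease it strictly in $\gnormlex$. The simulating TRS $\RSsim$ would then be rebuilt so that its first $d$ arguments range over the encoded sets $W_\Phi$, with the decrease rules $1_i$--$4_i$ replaced by LPO-compatible rules that recompute $\nu_\Phi$ and effect a single $\succ$-step in $W_\Phi$. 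Since each $W_\Phi$ is bounded by the multiply recursive $g_\Phi$, the auxiliary system $\RSsim'$ computing the bounding functions remains LPO-compatible, so Lemma~\ref{lem:rssimmultrec} together with Weiermann's theorem~\cite{W95} would still give that $\Dc{\RSsim}$ is multiply recursive, and Lemmata~\ref{lem:rssimulation} and \ref{lem:rssimulationstart} would transport the bound back to $\RS$.

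The hard part, and the reason the statement is posed as a conjecture rather than proved, is that two of the three structural features fail for genuinely general processors. First, many sound processors modify the rule component, producing children $(\QQ,\SS)$ with $\SS\neq\RS$; usable-rules, narrowing, instantiation, rewriting, flattening, and semantic-labelling processors all do this. The whole norm-and-simulation machinery is built over one fixed $\RS$: both $\tr$ and $\RSsim$ are defined relative to it, and the relative-rewriting quantities in $\norm$ presuppose a single ambient system. Handling a rule component that varies along the branches would require a family of interpretations $\tr_\SS$ glued along the proof tree, and it is unclear whether these compose while remaining simultaneously compatible with a single LPO. Second, for such transforming processors the progress made is not in general witnessed by a term-indexed measure that decreases under the original $\rsrew{\RS}$; the decrease lives in the transformed system, so the strict-decrease step (the analogue of Lemma~\ref{lem:strictlex}) has no direct counterpart. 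Extracting from the merely global hypothesis that the induced complexity of $\Phi$ is multiply recursive a \emph{local}, term-indexed, well-founded measure that both bounds the local contribution and glues lexicographically across processors that may rewrite the TRS itself is exactly the gap I do not see how to close in general, and I would expect any full proof to hinge on a uniform simulation theorem for rule-transforming processors that is presently missing.
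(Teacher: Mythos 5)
The statement you set out to prove is the paper's closing \emph{conjecture}: the authors deliberately leave it unproven, and state it precisely because their Main Theorem (Theorem~\ref{t:main}) covers only the three processors $\PhiRP$, $\PhiDG$, and $\PhiSC$. So there is no proof in the paper to compare against, and your proposal --- by your own admission in its final paragraph --- is not a proof either, but a proof plan whose decisive step is left open. That said, your reconstruction of the attack route is accurate: the lexicographic norm along the current path of the proof tree, weakly decreasing under non-root steps (Lemma~\ref{lem:weaklex}) and strictly decreasing at created positions under root steps (Lemma~\ref{lem:strictlex}), compiled into a simulating TRS terminating by LPO so that Weiermann's result yields the multiply recursive bound, is exactly the mechanism of Section~\ref{sec:modular}; and your three axioms (fixed rule component, a term-indexed measure with the two decrease properties, multiply recursive boundedness via a function like $g$) are a faithful abstraction of what the paper's case analyses actually use.

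The gap you name is genuine, and it is the open problem itself. Two points deserve emphasis. First, the hypothesis that a processor's ``induced complexity'' is at most multiply recursive is a global statement about derivation heights; nothing in it produces the local, term-indexed progress measure $\nu_\Phi$ your axioms demand, nor the strict decrease for removed pairs --- for transforming processors (usable rules, narrowing, instantiation, semantic labelling) any decrease is witnessed only in the transformed problem, and children $(\QQ,\SS)$ with $\SS\neq\RS$ break the definitions of $\tr$, $\norm$, and $\RSsim$, all of which are parametrised by a single ambient $\RS$. Second, even granting suitable measures $\nu_\Phi$, your plan needs the recomputation rules for all the domains $W_\Phi$ to be \emph{simultaneously} compatible with one LPO (or some order of multiply recursive strength), a uniformity assumption that the hypothesis does not supply. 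In short: your plan generalises the paper's method exactly as far as it can go, correctly diagnoses where and why it stops, but does not close the conjecture --- and neither does the paper.
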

Should this conjecture be true, then for instance, none of the existing automated termination
techniques would in theory be powerful enough to prove
termination of Dershowitz's system \texttt{TRS/D33-33}, aka the Hydra battle rewrite system, (compare also~\cite{DM07,M09a}).

% \bibliographystyle{plain}
% \bibliography{references}

\end{document}